\title{Modulo Counting on Words and Trees%
\footnote{Supported by the Polish National Science
Centre grant No.~2016/21/B/ST6/01444.}}
\author{Bartosz Bednarczyk\\
  {\small Computer Science Department, ENS Paris-Saclay Cachan, France; and}\\
  {\small Institute of Computer Science, University of Wroc\l aw Wroc\l aw, Poland}\\
  {\small \texttt{bartosz.bednarczyk@ens-paris-saclay.fr}}
  \and Witold Charatonik \\
  {\small Institute of Computer Science, University of Wroc\l aw Wroc\l aw, Poland}\\
  {\small \texttt{Witold.Charatonik@cs.uni.wroc.pl}}}
\date{}
\newtheorem{theorem}{Theorem}
\newtheorem{lemma}[theorem]{Lemma}
\newtheorem{corollary}[theorem]{Corollary}
\newtheorem{definition}[theorem]{Definition}
\begin{document}
  \maketitle

\newcommand{\DDD}{\mbox{\large \boldmath $\delta$}}
\newcommand{\cupdot}{\mathbin{\mathaccent\cdot\cup}}
\renewcommand{\phi}{\varphi} 
\newcommand{\AAA}{\mbox{\large \boldmath $\alpha$}}
\newcommand{\BBB}{\mbox{\large \boldmath $\beta$}}
\newcommand{\EQ}{\ensuremath{{\mathcal EQ}}}
\newcommand{\Sat}{\ensuremath{\textit{Sat}}}
\newcommand{\FinSat}{\ensuremath{\textit{FinSat}}}
\newcommand{\FO}{\mbox{\rm FO}}
\newcommand{\MSO}{\mbox{\rm MSO}}
\newcommand{\FOt}{\mbox{$\mathrm{FO}^2$}}
\newcommand{\FOth}{\mbox{$\mbox{\rm FO}^3$}}
\newcommand{\Ct}{\mbox{$\mbox{\rm C}^2$}}


\newcommand{\nlogspace}{\textsc{NLogSpace}}
\newcommand{\np}{\textsc{NPTime}}
\newcommand{\ptime}{\textsc{PTime}}
\newcommand{\pspace}{\textsc{PSpace}}
\newcommand{\exptime}{\textsc{ExpTime}}
\newcommand{\expspace}{\textsc{ExpSpace}\xspace}
\newcommand{\nexptime}{\textsc{NExpTime}}
\newcommand{\twoexptime}{2\textsc{-ExpTime}\xspace}
\newcommand{\toexptime}{2\textsc{-NExpTime}}
\newcommand{\twoexpspace}{2\textsc{-ExpSpace}}
\newcommand{\apspace}{\textsc{APSpace}}
\newcommand{\aptime}{\textsc{APTime}}
\newcommand{\atwoexptime}{\textsc{A}(2\textsc{-ExpTime})\xspace}
\newcommand{\aexpspace}{\textsc{AExpSpace}\xspace}
\newcommand{\aspace}{\textsc{ASpace}}
\newcommand{\atime}{\textsc{ATime}}
\newcommand{\aexptime}{\textsc{AExpTime}}
\newcommand{\dtime}{\textsc{DTime}}
\newcommand{\nptime}{\textsc{NPTime}}


\newcommand{\set}[1]{\{#1\}}
\newcommand{\md}[2][] {{\lfloor#2\rfloor_{#1}}}
\newcommand{\sizeOf}[1]{| #1 |}
\newcommand{\str}[1]{{\mathfrak{#1}}}
\newcommand{\restr}{\!\!\restriction\!\!}
\newcommand{\tuple}[1]{\langle#1\rangle}

\newcommand{\N}{{\mathbb N}}
\newcommand{\Z}{{\mathbb Z}} 


\newcommand{\wordsucc}{\mathit{succ}}

\newcommand{\tl}{\theta_{\ll}}
\newcommand{\tg}{\theta_{\gg}}
\newcommand{\tprev}{\theta_{-1}}
\newcommand{\tnext}{\theta_{+1}}
\newcommand{\teq}{\theta_{=}}

\newcommand{\FOTwoModWords}{\mbox{$\mbox{\rm FO}_{\rm MOD}^2[\leq, \wordsucc]$}}
\newcommand{\FOThreeModWords}{\mbox{$\mbox{\rm FO}_{\rm MOD}^3[\leq, \wordsucc]$}}
\newcommand{\FOModWords}{\mbox{$\mbox{\rm FO}_{\rm MOD}[\leq, \wordsucc]$}}


\newcommand{\tfree}{\theta_{\not\sim}}
\newcommand{\succv}{{\downarrow}}
\newcommand{\lessv}{{\downarrow^{\scriptscriptstyle +}}}
\newcommand{\succh}{{\rightarrow}}
\newcommand{\lessh}{{\rightarrow^{\scriptscriptstyle +}}}
\newcommand{\succlessv}{{\downarrow \downarrow^+}}
\newcommand{\predgreatv}{{\uparrow \uparrow^+}}

\newcommand{\tsuccv}{\theta_{\downarrow}}
\newcommand{\tprecv}{\theta_{\uparrow}}
\newcommand{\tlessv}{\theta_{\downarrow \downarrow^+}}
\newcommand{\tgreatv}{\theta_{\uparrow \uparrow^+}}
\newcommand{\tsucch}{\theta_{\rightarrow}}
\newcommand{\tprech}{\theta_{\leftarrow}}
\newcommand{\tlessh}{\theta_{\rightrightarrows^+}}
\newcommand{\tgreath}{\theta_{\leftleftarrows^+}}

\newcommand{\FOtall}{\mbox{$\FOt[\succv, \lessv, \succh, \lessh, \tau_{com}]$}}
\newcommand{\Ctall}{\mbox{$\Ct[\succv, \lessv, \succh, \lessh, \tau_{com}]$}}
\newcommand{\CTwoFull}{\mbox{$\Ct[\succv, \lessv, \succh, \lessh]$}}

\newcommand{\cutout}[1]{}

\newcommand{\fw}{{\not\sim}}
\newcommand{\Nn}{\mathbb{N}}
\newcommand{\tree}{\str{T}}
\newcommand{\Ll}{L}
\newcommand{\trs}{I}
\newcommand{\hv}{\hat{v}}
\newcommand{\greatv}{{\uparrow^{\scriptscriptstyle +}}}
\newcommand{\greateqv}{{\uparrow^{\scriptscriptstyle *}}}
\newcommand{\lesseqv}{{\downarrow_{\scriptscriptstyle *}}}


\newcommand{\tp}[2]{{\rm tp}^{\str{#1}}({#2})}
\newcommand{\ftp}[3]{{\rm ftp}^{\str{#1}}_{#2}({#3})}
\newcommand{\rftpa}[2]{{\rm rftp}_{{#1}}({#2})}
\newcommand{\rftpv}[3]{{\rm rftp}^{\str{#1}}_{#2}({#3})}
\newcommand{\hftp}[3]{{\rm hftp}^{\str{#1}}_{#2}({#3})}
\newcommand{\hshift}{\hspace*{20pt}}
\newcommand{\hshiftshift}{\hspace*{30pt}}

\newcommand{\ftpA}{\overline{\alpha}}
\newcommand{\ftpB}{\overline{\beta}}
\newcommand{\ftpG}{\overline{\gamma}}

\newcommand{\ftpAta}{\overline{\alpha}(\theta)(\alpha)}
\newcommand{\ftpBta}{\overline{\beta}(\theta)(\alpha)}
\newcommand{\ftpGta}{\overline{\gamma}(\theta)(\alpha)}


\newcommand{\FOMod}{\mbox{$\mbox{\rm FO}_{\rm MOD}$}}
\newcommand{\FOOneMod}{\mbox{$\mbox{\rm FO}_{\rm MOD}^1$}}
\newcommand{\FOTwoMod}{\mbox{$\mbox{\rm FO}_{\rm MOD}^2$}}
\newcommand{\FOTwoModTrees}{\mbox{$\mbox{\rm FO}_{\rm MOD}^2[\succv, \lessv,
\succh, \lessh]$}}
\newcommand{\FOTwoModDesc}{\mbox{$\mbox{\rm FO}_{\rm MOD}^2[\succv, \lessv]$}\xspace}
\newcommand{\FOModTrees}{\mbox{$\mbox{\rm FO}_{\rm MOD}[\succv, \lessv,
\succh, \lessh]$}}
\newcommand{\FOModDesc}{\mbox{$\mbox{\rm FO}_{\rm MOD}[\succv, \lessv]$}\xspace}

\newcommand{\Mod}[1]{ {(\text{mod}\ #1)}}
\newcommand{\existsmod}[3]{\exists^{{#1} {#2}, {#3}}}
\newcommand{\ind}[2]{\mathds{1}_{#1}^{\left( #2 \right)}}

  \begin{abstract}
    We consider the satisfiability problem for the two-variable
    fragment of the first-order logic extended with modulo counting
    quantifiers and interpreted over finite words or trees. We prove a
    small-model property of this logic, which gives a technique for
    deciding the satisfiability problem. In the case of words this
    gives a new proof of \expspace upper bound, and in the case of
    trees it gives a \twoexptime algorithm.  This algorithm is
    optimal: we prove a matching lower bound by a generic reduction
    from alternating Turing machines working in exponential space; the
    reduction involves a development of a new version of tiling games.
  \end{abstract}

  \begingroup
  \let\clearpage\relax
\section{Introduction}

\subparagraph*{Two-variable logics.}  Two-variable logic, \FOt, is one
of the most prominent decidable fragments of first-order logic. It is
important in computer science because of its decidability and
connections with other formalisms like modal, temporal and description
logics or query languages. The satisfiability problem for \FOt{} is
\nexptime-complete and satisfiable formulas have models of exponential
size. In this paper we are interested in extensions of \FOt{}
interpreted over finite words and trees. It is known that \FOt{} over
words can express the same properties as unary temporal
logic~\cite{EtessamiVW02} and \FOt{} over trees is precisely as
expressive as the navigational core of XPath, a~query language for XML
documents~\cite{MarxDeRijke04}. The satisfiability problem for \FOt{}
over words is shown to be \nexptime-complete in \cite{EtessamiVW02},
and over trees---\expspace-complete in \cite{BBCKLMW-tocl}. Recently
it was shown that these complexities do not
change\cite{CharatonikWitkowski-LMCS16, BednarczykChK-CSL17} if
counting quantifiers of the form $\exists^{\leq k}, \exists^{\geq k}$
are added.

\subparagraph{Modulo counting quantifiers.} First-order logic has a
limited expressive power. For example, it cannot express even such
simple quantitative properties as parity. To overcome this problem,
Wolfgang Thomas with his coauthors introduced in the 80s of the last
century \emph{modulo counting quantifiers} of the form ``there exists
$a$ mod $b$ elements $x$ such that \ldots''. A survey of results in
first order logic extended with modulo counting quantifiers can be
found in \cite{StraubingT08}.  Recent research in this area involves a
study of definability of regular languages on words and its
connections to algebra \cite{StraubingT08}, \cite{StraubingTT95};
definable languages of $(\N, +)$ \cite{RoyS07}, \cite{KrebsS12};
equivalences of finite structures \cite{Nurmonen00}; definable tree
languages \cite{Potthoff94}, \cite{BenediktS09}; locality
\cite{HarwathS16}, \cite{HeimbergKS-LICS16}; extensions of Linear
Temporal Logic \cite{BaziramwaboMT99}, \cite{Sreejith11}, complexity
of the model-checking problem \cite{HeimbergKS-LICS16},
\cite{BerkholzKS17}. Not much is known about the complexity of the
satisfiability problem for this logic. The only work that we are aware
of is \cite{LodayaSreejith-DLT17}, which proves \expspace-completeness
of the satisfiability problem for \FOt{} with modulo counting
quantifiers over finite words. On the other hand, a simple adaptation
of automata techniques for deciding WS1S or WS2S gives a non-elementary
decision procedure for the satisfiability problem of full first-order
logic with modulo quantifiers over words or ranked trees.

\subparagraph{Our contribution.} We consider the satisfiability
problem for \FOt{} with modulo counting quantifiers interpreted over
finite words and trees. We provide an alternative to
\cite{LodayaSreejith-DLT17} proof of \expspace upper bound for the
case of words. This proof is based on a small-model property of this
logic and can be extended to the case of ordered and unranked
trees. By ordered, we mean that the list of children of any node is
ordered by the sibling relation, and by unranked, that there is no
limit on the number of children. We prove that the case of such trees
is \twoexptime-complete. For the upper bound, we again prove a
small-model theorem and then give an alternating algorithm that
decides the problem in exponential space. Since \aexpspace =
\twoexptime, this gives the desired upper bound. With some obvious
modifications this algorithm can be also applied to unordered or
ranked trees. For the lower bound, we develop a new version of tiling
games that can encode computations of alternating Turing machines
working in exponential space, and can be encoded in the logic. In our
encoding we do not use ordering of children, and the number of
children of any node is bounded by 2, which shows that already the
case of unordered and ranked trees is \twoexptime-hard.

The current paper is a~full version of~\cite{BednarczykCh-short}.

\section{Preliminaries}


\subparagraph*{Tuples and modulo remainders.}  By a $k$-tuple of
numbers we mean an element of the set $\N^{k}$.  By $\vec{0}_k$ and
$\vec{1}_k$ we will denote $k$-tuples consisting of, respectively,
only zeros and only ones.  We denote the $i$-th element of a $k$-tuple
$t$ by $\pi_i(t)$.  We will often drop the subscript $k$ if the
dimension $k$ is clear from the context.

Consider a finite set $X$, a number $k \in \N$ and a mapping $f$ from
$X$ to $\N^k$. We say that $f$ is \emph{zero}, if for all $x \in X$ we
have $f(x) = \vec{0}_k$. We say that $f$ is a \emph{singleton}, if
there exists a unique argument $x \in X$ such that $f(x) = \vec{1}_k$
and for all other arguments the function $f$ is zero.

Let $n$ be a positive integer. By $\Z_{n}$ we denote the set of all
remainders modulo $n$, that is the set~$\set{0,1, \ldots, n{-}1}$. We
denote by $r_n$ the remainder function modulo $n$, that is $r_n : \N
\rightarrow \Z_{n}$, $r_n(x)= x\mathrm{~mod~} n$.  


\subparagraph*{Syntax and semantics.}

We refer to structures with fraktur letters, and to their universes
with the corresponding Roman letters. We always assume that structures
have non-empty universes.  We work with signatures of the form
$\tau = \tau_0 \cup \tau_{nav}$, where $\tau_0$ is a set of unary
relational symbols, and
$\tau_{nav} \subseteq \{ \succv, \lessv, \succh, \lessh, \leq,
\wordsucc \}$ is the set of \textit{navigational} binary symbols,
which will be interpreted in a special way, depending on which kind of
structures, words or trees, are considered as models.

We define the two-variable fragment of first-order logic with modulo
counting quantifiers $\FOTwoMod$ as the set of all first-order
formulas (over the signature $\tau$), featuring only the variables $x$
and $y$, extended with modulo counting quantifiers of the form
$\existsmod{\leq}{k}{l}, \existsmod{=}{k}{l}$ and
$\existsmod{\geq}{k}{l}$, where $l \in \Z_+$ is a positive integer and
$k \in \Z_{l}$. The formal semantics of such quantifiers is as
follows: a formula $\existsmod{\bowtie}{k}{l} \; {y} \; \varphi(x,y)$,
where $\bowtie \; \in \set{\leq, =, \geq}$, is true at element $a$ of
a structure $\str{M}$, in symbols $\str{M},a \models \left(
  \existsmod{\bowtie}{k}{l} \; {y} \; \varphi(x,y)\right)$ if and only
if $ r_l \left( \; \sizeOf{\set{ b \in \mathit{M} : \varphi[a,b] }} \;
\right) \bowtie k.$ When measuring the length of formulas we will
assume binary encodings of numbers $k$ and $l$ in superscripts of
modulo quantifiers.

In \cite{LodayaSreejith-DLT17} the authors use a different notation
for modulo counting quantifiers in their logic
$\textsc{FOmod}^2[<]$. It seems that the main difference is that they
require equality in place of our $\bowtie$ comparison. The use of
$\leq$ and $\geq$ operators makes the logic exponentially more
succinct. For example the formula
$\existsmod{\geq}{2^n}{2 \cdot 2^n} x \; p(x)$ has length
$\mathcal{O}(n)$, while an equivalent formula in $\textsc{FOmod}^2[<]$
has the form
$ \bigvee_{i=2^n}^{2 \cdot 2^n} \existsmod{=}{i}{2 \cdot 2^n} x \;
p(x)$ and its length is $\Omega(2^n \cdot n)$.

We write $\FOTwoMod[\tau_{nav}]$, to denote that the only binary
symbols that are allowed in signatures are from $\tau_{nav}$.  We will
work with two logics: $\FOTwoModWords$ and $\FOTwoModTrees$.  The
former is interpreted over finite words, where $\leq$ denotes the
linear order over word positions and $\wordsucc$ is the successor
relation.  The latter logic is interpreted over finite, unranked and
ordered trees.  The interpretation of the symbols from $\tau_{nav}$ is
the following: $\succv$ is interpreted as the child relation, $\succh$
as the right sibling relation, and $\lessv$ and $\lessh$ as their
respective transitive closures.  We read $u \downarrow w$ as ``$w$ is
a \emph{child} of $u$'' and $u \rightarrow w$ as ``$w$ is the
\emph{right sibling} of $u$''.  We will also use other standard
terminology like \emph{ancestor}, \emph{descendant},
\emph{preceding-sibling}, \emph{following-sibling}, etc. In both cases
of words and trees all elements of the universe can be labeled by an
arbitrary number of unary predicates from $\tau_0$.


\subparagraph*{Normal form.}

As usual when working with two-variable logic, it is convenient to use
so-called Scott normal form \cite{Scott1962}. The main feature of such
form is that nesting of quantifiers is restricted to depth two.  Below
we adapt this notion to the setting of $\FOTwoMod$.

\begin{definition} \label{def:fo2mod_normalform}
We say that a formula $\varphi \in \FOTwoMod$ is in \emph{normal form}, if
$$
\varphi =
\forall{x} \forall{y} \; \chi(x, y) \wedge
\bigwedge_{i=1}^{n}
\forall{x} \exists{y} \; \chi_i(x, y) \wedge
\bigwedge_{j=1}^{m}
\forall{x} \existsmod{\bowtie_j}{k_j}{l_j} {y} \; \psi_j(x, y)
$$
where $\bowtie_i \in \lbrace \leq, \geq \rbrace$,
each $l_j$ is a positive integer and each $k_j \in \Z_{l_j}$ and
all $\chi, \chi_i$ and $\psi_j$ formulas are quantifier-free.
We require both $n$ and $m$ parameters to be non-zero.
\end{definition}

The following lemma is proved by a routine adaptation of a normal form
proof from \cite{GradelOR97}.

\begin{lemma} \label{lemma:normalform}
  Let $\varphi$ be a formula from $\FOTwoMod$ over a
  signature~$\tau$. There exists a polynomially computable $\FOTwoMod$
  normal form formula $\varphi'$ over signature $\tau'$, consisting of
  $\tau$ and polynomially many additional unary symbols,
  such that $\varphi$ and $\varphi'$ are equisatisfiable.
\end{lemma}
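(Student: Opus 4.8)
The plan is to follow the standard Scott-normal-form construction (as in \cite{GradelOR97}), the only novelty being the treatment of the modulo counting quantifiers, which are handled exactly like ordinary existential quantifiers since each $\existsmod{\bowtie}{k}{l} y \; \psi$ already has quantifier-free body after the rewriting. First I would rewrite $\varphi$ so that every maximal quantifier-free subformula sits directly under a single quantifier: proceeding bottom-up, whenever $\varphi$ contains a subformula $Qy \; \eta(x,y)$ (with $Q$ one of $\exists$, $\forall$, $\existsmod{\bowtie}{k}{l}$, or the dual written via negation) in which $\eta$ is itself not quantifier-free, I introduce a fresh unary predicate $P_\eta$, replace the occurrence of $Qy\,\eta$ by the atom $P_\eta(x)$, and conjoin the axiom $\forall x\,(P_\eta(x) \leftrightarrow Qy\,\eta(x,y))$. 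Iterating, after polynomially many steps all quantifiers range over quantifier-free bodies and the formula is a Boolean combination of such "flat" sentences together with the original atoms; pulling the remaining outermost structure through and introducing one more predicate for the top Boolean combination yields a conjunction of sentences each of the shape $\forall x\,\xi(x)$, $\forall x\,Qy\,\eta(x,y)$ with $\xi,\eta$ quantifier-free.

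Next I would massage these conjuncts into the exact syntactic shape of Definition \ref{def:fo2mod_normalform}. A sentence $\forall x\,\xi(x)$ with $\xi$ quantifier-free in one variable is absorbed into the $\forall x\forall y\,\chi$ conjunct (e.g.\ conjoining $\xi(x)\wedge\xi(y)$ into $\chi$). A conjunct $\forall x\,\forall y\,\eta(x,y)$ goes straight into $\chi$. A conjunct $\forall x\,\exists y\,\eta(x,y)$ is already of the required form. For a conjunct $\forall x\,\existsmod{\bowtie}{k}{l} y\,\psi$ with $\bowtie\,\in\{\le,=,\ge\}$: the cases $\le$ and $\ge$ are kept as-is, and an equality constraint $\existsmod{=}{k}{l}$ is replaced by the conjunction $\existsmod{\le}{k}{l} y\,\psi \wedge \existsmod{\ge}{k}{l} y\,\psi$, so only $\bowtie\,\in\{\le,\ge\}$ remain, as required. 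Finally, to meet the requirement that $n,m\ge 1$, I add a dummy conjunct $\forall x\,\exists y\,(x=x)$ and a dummy conjunct $\forall x\,\existsmod{\ge}{0}{1} y\,(x=x)$, both trivially true over non-empty structures.

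Equisatisfiability is immediate in one direction (any model of $\varphi'$ restricted to $\tau$ models $\varphi$, since the added axioms merely define the fresh predicates) and in the other direction one expands a model of $\varphi$ to $\tau'$ by interpreting each fresh predicate according to its defining axiom; the semantics of the modulo quantifiers is preserved because these definitions are literal abbreviations. The bookkeeping on sizes is routine: each nesting point and each Boolean node contributes one fresh unary symbol and a constant-size axiom, the numbers $k_j,l_j$ are copied verbatim (so the binary-encoded length is preserved), and the whole transformation is computable in polynomial time. The only mildly delicate point — the "main obstacle", such as it is — is being careful that when a quantified subformula $Qy\,\eta$ occurs under the scope of the other variable, one abstracts it by a unary predicate applied to the free variable (here always $x$ after suitable renaming), and that re-using variables in the two-variable setting does not create an illegal third variable; this is exactly the standard argument from \cite{GradelOR97}, and the modulo quantifiers raise no additional issue because $r_l(\cdot)\bowtie k$ depends only on the set defined by the quantifier-free body.
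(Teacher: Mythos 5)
Your overall strategy (Scott-style abstraction of quantified subformulas by fresh unary predicates, then massaging the definitional axioms into the shape of Definition \ref{def:fo2mod_normalform}) is the same route the paper intends by its appeal to \cite{GradelOR97}. But there is a genuine gap at exactly the point you declare unproblematic, namely the claim that the modulo quantifiers ``are handled exactly like ordinary existential quantifiers'' and ``raise no additional issue''. Your abstraction step produces definitional axioms of the form $\forall x\,\big(P_\eta(x) \leftrightarrow \existsmod{\bowtie}{k}{l} y\,\eta(x,y)\big)$ whenever a modulo-quantified subformula is nested under another quantifier, and after splitting the biconditional (and using $\neg\existsmod{\le}{k}{l} \equiv \existsmod{\ge}{k+1}{l}$, etc.) both directions are \emph{guarded} modulo constraints of the form $\forall x\,\big(\lambda(x) \rightarrow \existsmod{\bowtie'}{k'}{l} y\,\eta(x,y)\big)$ with $\lambda$ a literal over $P_\eta$. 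None of the conjunct shapes you list in your ``massaging'' step covers these, and Definition \ref{def:fo2mod_normalform} only admits the unconditional form $\forall x\,\existsmod{\bowtie_j}{k_j}{l_j} y\,\psi_j$. The trick that works for $\exists$ --- pushing the guard into the body, $\forall x\,(\lambda(x)\rightarrow\exists y\,\eta) \equiv \forall x\exists y\,(\neg\lambda(x)\vee\eta)$ --- is unsound for modulo counting, because altering the body changes the \emph{count} at elements not satisfying the guard: with body $\neg\lambda(x)\vee\eta(x,y)$ such an element counts all of $M$, so the conjunct wrongly demands $r_l(|M|)\bowtie' k'$; with body $\lambda(x)\wedge\eta(x,y)$ it counts $0$, which is fine for $\le$-constraints but wrongly forces $k'=0$ for $\ge$-constraints. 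Guarded $\ge$-constraints with $k'\ge 1$ do arise (e.g.\ from any positive occurrence of $\existsmod{\ge}{k}{l}$, any negative occurrence of $\existsmod{\le}{k}{l}$, and from your splitting of $\existsmod{=}{k}{l}$), so your construction as written does not reach the normal form.

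Repairing this needs an actual idea rather than the $\exists$-analogy: for instance, one can introduce a fresh unary predicate $D$, add conjuncts pinning the residue $r_l(|D|)$ (e.g.\ $\forall x\,\existsmod{\ge}{k'}{l} y\,D(y)$), and use the body $(\lambda(x)\wedge\eta(x,y))\vee(\neg\lambda(x)\wedge D(y))$ so that unguarded elements are satisfied by the $D$-count --- and even then one must argue separately about models with fewer than $k'$ elements, where no such $D$ exists but where the guarded subformula is false at every element anyway, so the guard can be taken empty. Working this (or an equivalent device) out, together with a correct treatment of polarities, is precisely the non-routine content of the ``routine adaptation'' the paper alludes to; without it your proof does not go through.
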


In the next sections, when a normal form formula $\varphi$ is considered
we always assume that it is as in Definition \ref{def:fo2mod_normalform}.
In particular we allow ourselves, without explicitly recalling the shape
of $\varphi$, to refer to its  parameters $n, m$, components
$\chi, \chi_i, \psi_j$ and numbers $k_j, l_j$ given in modulo
counting quantifiers. 

Consider a conjunct $\forall{x} \exists{y} \; \chi_i(x, y)$ from
an \FOTwoMod\ normal form formula $\varphi$. Let $\str{M}$ be its 
finite model and let $v \in M$ be an arbitrary element
from the structure. Then an element $w \in M$ such that
$\str{M} \models \chi_i[v,w]$ is called a $\chi_i$-\emph{witness}
for $v$. Analogously, we will speak about $\psi_j$-witnesses for
modulo conjuncts or simply witnesses if a formula is clear from the
context.


\subparagraph*{Order formulas.}

Let us call \emph{order formulas} the formulas specifying  relative
positions of pairs of elements in a structure with respect to the
binary navigational predicates.

In the case of words, when $\tau_{nav} = \set{\leq, \wordsucc}$, there are 
five possible order formulas: $x{=}y$, $\wordsucc(x,y)$, $\wordsucc(y,x)$,
$x{\neq}y \wedge \neg \wordsucc(x,y) \wedge x{\leq}y$ and
$x{\neq}y \wedge \neg \wordsucc(y,x) \wedge y{\leq}x$. We denote them
respectively as $\teq, \tnext, \tprev, \tl, \tg$.
Let $\Theta_{words}$ be the set of these five formulas.

In the case of trees, when $\tau_{nav} =\{ \succv, \lessv, \succh, \lessh \}$,
we use $x \fw y$ to abbreviate the formula stating that $x$ and $y$ are in 
\emph{free position}, i.e., that they are related by none of the navigational
binary predicates available in the signature. 
There are ten possible order formulas: 
$x \succv y$, $y \succv x$, $x \lessv y \wedge \neg (x \succv y)$,
$y \lessv x \wedge \neg (y \succv x)$, $x \succh y$, $y \succh x$,
$x \lessh y \wedge \neg (x \succh y)$, $y \lessh x \wedge \neg (y \succh x)$,
$x \fw y$, $x{=}y$. They are denoted, respectively, as: $\tsuccv$, $\tprecv$,
$\tlessv$, $\tgreatv$,  $\tsucch$, $\tprech$, $\tlessh$, $\tgreath$,
$\tfree$, $\teq$. Let $\Theta_{trees}$ be the set of these ten formulas. 
We will use symbol $\Theta$ to denote either $\Theta_{words}$ or
$\Theta_{trees}$ if the type of structure is not important or clear
from context.
%
%
\subparagraph*{Types and local configurations.}

Following a standard terminology, an \emph{atomic $1$-type} over a
signature $\tau$ is a maximal satisfiable set of atoms or negated
atoms involving only the variable~$x$. Usually we identify a $1$-type
with the conjunction of all its elements.  We will denote by $\AAA$
the set of all $1$-types over $\tau$.  Note that the size of $\AAA$ is
exponential in the size of $\tau$.  For a given $\tau$-structure
$\str{M}$, and an element $v \in M$ we say that $v$ \emph{realizes} a
$1$-type $\alpha$, if $\alpha$ is the unique $1$-type such that
$\str{M} \models \alpha[v]$.  We denote by $\tp{M}{v}$ the $1$-type
realized by $v$.

A $1$-type provides us only information about a single element of a
model. Below we generalize this notion such that for each element of
a structure it provides us also some information about surrounding
elements and their $1$-types.

\begin{definition} \label{def:full_type} An
  $(l_1, l_2, \ldots, l_m)$-full type $\ftpA$ over a signature $\tau$
  is a function of the type
  $\ftpA : \Theta \rightarrow \AAA \rightarrow \set{0,1} \times
  \Z_{l_1} \times \Z_{l_2} \times \ldots \times \Z_{l_m}$ (a function
  that takes an order formula from $\Theta$ and returns a function
  that takes a $1$-type from $\AAA$ and returns a tuple of
  integers). Additionally, we require that all
  $(l_1, l_2, \ldots, l_m)$-full types satisfy the following
  conditions:
  
  \begin{itemize}
    \item $\ftpA(\teq)$ is a singleton,
    \item $\ftpA(\theta)$ is either zero or singleton for $\theta \in
      \set{ \tnext, \tprev, \tsucch, \tprecv, \tprech }$ and
    \item if $\ftpA(\tnext)$ (respectively $\ftpA(\tprev),
      \ftpA(\tsucch), \ftpA(\tsuccv), \ftpA(\tprech), \ftpA(\tprecv)$)
      is zero then also the function $\ftpA(\tg)$ (respectively
      $\ftpA(\tl), \ftpA(\tlessh), \ftpA(\tlessv), \ftpA(\tgreath),
      \ftpA(\tgreatv)$) is zero.
  \end{itemize}
\end{definition}
In the following, if the numbers $l_1, l_2, \ldots, l_m$ are clear
from the context or not important, we will be omitting them.

Let us briefly describe the idea behind the notion of full types.
Ideally, for a given element $v$ of a structure $\str{M}$, we would
like to know the exact number of occurrences of each $1$-type on each
relative position.  However, to keep the memory usage under control,
we store only a summary of this information.  For the purpose of
$\forall \exists$ conjuncts of a formula in normal form, it is enough
to know if a type $\alpha$ occurs at least once in a given relative
position.  This information is stored in the 0-1 part of a full type.
Additionally, for the purpose of $\existsmod{\bowtie_j}{k_j}{l_j}$
conjuncts, we store the remainders of the total numbers of occurrences
of each $1$-type $\alpha$ modulo each of $l_j$ appearing in modulo
quantifiers.

\begin{definition}\label{def:realized_ftp}
  Let $\str{M}$ be a $\tau$-structure and $v$ an arbitrary element
  from $M$.  We will denote by
  $(l_1, l_2, \ldots, l_m)$--$\ftp{M}{}{v}$ the unique
  $(l_1, l_2, \ldots, l_m)$-full type \emph{realized} by $v$ in
  $\str{M}$, i.e., the $(l_1, l_2, \ldots, l_m)$-full type $\ftpA$
  such that for all order formulas $\theta \in \Theta$ and for all
  atomic $1$-types $\alpha$, the value of $\ftpA(\theta)(\alpha)$ is
  equal to the tuple
  $\left( \mathit{cut}_1(W), r_{l_1}(W), r_{l_2}(W), \ldots,
    r_{l_m}(W) \right)$, where $W$ is the number of occurrences of
  elements of type $\alpha$ in the relative position described by
  $\theta$, namely
$$
W=\sizeOf{\set{ w \in M : \str{M} \models \theta[v,w] \wedge
    \tp{M}{w}=\alpha}}.
$$
and $\mathit{cut}_1(W)$ is $0$ if $W$ is empty and $1$ otherwise.
\end{definition}


The following definition and lemma are basic tools used in the proofs
of small-model properties $\FOTwoMod$.

\begin{definition}[$\varphi$-consistency] \label{def:phiconsistent}
  Let $\varphi$ be a $\FOTwoMod$ formula in normal form and let $\alpha$ be
  the unique $1$-type satisfying $\ftpA(\teq)(\alpha)=\vec{1}$. We say that
  a $(l_1, l_2, \ldots, l_m)$-full type $\ftpA$ is
  \emph{$\varphi$-consistent}, if it satisfies the following
  conditions:
\begin{enumerate}
\item It does not violate the $\forall \forall$ subformula i.e. for
  all order formulas $\theta \in \Theta$ and for all $1$-types
  $\beta$ such that 
  $\ftpA(\theta)(\beta) {\neq} \vec{0}$, the implication $\alpha(x)
  {\wedge} \beta(y) {\wedge} \theta(x,y) \models \chi(x,y)$ holds.
\item There is a witness for each $\forall \exists$ conjunct of
  $\varphi$.  Formally, consider a number $1 \leq i \leq n$ and a
  subformula $\forall{x} \exists{y} \; \chi_i(x,y)$.  We
  require that there exists a $1$-type $\beta$ and an order formula $\theta
  \in \Theta$ such that $\ftpA(\theta)(\beta) {\neq} \vec{0}$ and the
  logical implication $\alpha(x) {\wedge} \beta(y) {\wedge}
  \theta(x,y) \models \chi_i(x,y)$ holds.
\item Each modulo conjunct has the right number of witnesses. Consider
  a number $1 \leq j \leq m$ and a subformula $\forall{x}
  \existsmod{\bowtie_j}{k_j}{l_j} {y} \; \psi_j(x, y)$. We
  require that the remainder modulo $l_j$ of the number of witnesses
  encoded in the full-type satisfies the inequality $\bowtie_j
  k_j$. Formally,
$$
r_{l_j}
\Bigg(
\sum_{
  \theta \in \Theta, \beta \in \AAA \; : \;
  \alpha(x) {\wedge} \beta(y) {\wedge} \theta(x,y) \models \psi_j(x,y)
  } \pi_{j+1} \left( \ftpA(\theta)(\beta) \right)
\Bigg)
\bowtie_j k_j. 
$$
\end{enumerate}
\end{definition}

A proof of the following lemma is a straightforward unfolding of Definitions
\ref{def:realized_ftp} and \ref{def:phiconsistent} and the definition
of the semantics of \FOTwoMod.
 
\begin{lemma} \label{lemma:ficonsistent} Assume that a formula
  $\varphi \in \FOTwoMod$ in normal form is interpreted over finite
  words or trees. Then $\str{M} \models \varphi$ if and only if every
  $(l_1, l_2, \ldots, l_m)$-full type realized in $\str{M}$ is
  $\varphi$-consistent.
\end{lemma}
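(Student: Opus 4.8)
The plan is to prove both directions of the biconditional by a direct unfolding of the relevant definitions and the semantics of $\FOTwoMod$. Throughout, fix a finite word or tree model $\str{M}$ and a formula $\varphi$ in normal form as in Definition~\ref{def:fo2mod_normalform}.

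For the forward direction, assume $\str{M} \models \varphi$ and let $v \in M$ be arbitrary, realizing the full type $\ftpA = (l_1, \ldots, l_m)$--$\ftp{M}{}{v}$. I must check the three clauses of Definition~\ref{def:phiconsistent}. For clause~(1), suppose $\ftpA(\theta)(\beta) \neq \vec{0}$; by Definition~\ref{def:realized_ftp} there is some $w \in M$ with $\str{M} \models \theta[v,w]$ and $\tp{M}{w} = \beta$. Since $\str{M}$ satisfies the $\forall x \forall y\, \chi(x,y)$ conjunct, we get $\str{M} \models \chi[v,w]$; as $\theta, \alpha, \beta$ completely determine the truth of every atom over $\{x,y\}$ in the pair $(v,w)$, and $\chi$ is quantifier-free, this forces the required logical implication. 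For clause~(2), fix $i$; since $\str{M} \models \forall x \exists y\, \chi_i(x,y)$, pick a $\chi_i$-witness $w$ for $v$, let $\beta = \tp{M}{w}$ and $\theta$ the unique order formula with $\str{M} \models \theta[v,w]$; then $\ftpA(\theta)(\beta) \neq \vec{0}$ and, as before, $\alpha(x) \wedge \beta(y) \wedge \theta(x,y) \models \chi_i(x,y)$. For clause~(3), fix $j$; partition the set $\{b \in M : \str{M} \models \psi_j[v,b]\}$ according to the pair $(\theta, \tp{M}{b})$. A pair $(\theta,\beta)$ contributes a nonempty block exactly when $\alpha(x) \wedge \beta(y) \wedge \theta(x,y) \models \psi_j(x,y)$ (again because quantifier-free $\psi_j$ depends only on this data), and the size of that block is, modulo $l_j$, precisely $\pi_{j+1}(\ftpA(\theta)(\beta))$ by Definition~\ref{def:realized_ftp}. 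Summing and using that $r_{l_j}$ is a ring homomorphism on cardinalities, the displayed remainder equals $r_{l_j}\big(|\{b : \str{M}\models\psi_j[v,b]\}|\big)$, which by the semantics of $\existsmod{\bowtie_j}{k_j}{l_j}$ satisfies $\bowtie_j k_j$.

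For the converse, assume every full type realized in $\str{M}$ is $\varphi$-consistent; I must show $\str{M} \models \varphi$. The three conjunct groups of the normal form are handled by the three clauses, essentially running the above arguments in reverse. For the $\forall x \forall y\, \chi$ conjunct: take any $v, w \in M$, let $\theta$ be the order formula holding of $(v,w)$ and $\beta = \tp{M}{w}$; then $\ftpA(\theta)(\beta) \neq \vec{0}$ for $\ftpA$ the full type of $v$, so clause~(1) gives the implication $\alpha(x)\wedge\beta(y)\wedge\theta(x,y) \models \chi(x,y)$, whence $\str{M}\models\chi[v,w]$. For each $\forall x \exists y\, \chi_i$: clause~(2) applied to the full type of $v$ yields $\theta, \beta$ with $\ftpA(\theta)(\beta)\neq\vec{0}$ and the implication into $\chi_i$; unpacking $\ftpA(\theta)(\beta)\neq\vec{0}$ via Definition~\ref{def:realized_ftp} produces a concrete $w$ with $\str{M}\models\theta[v,w]$ and $\tp{M}{w}=\beta$, which is then a $\chi_i$-witness. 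For each modulo conjunct: clause~(3) together with the block decomposition and homomorphism property of $r_{l_j}$ used above shows $r_{l_j}(|\{b : \str{M}\models\psi_j[v,b]\}|) \bowtie_j k_j$, i.e. $\str{M},v \models \existsmod{\bowtie_j}{k_j}{l_j} y\, \psi_j(x,y)$, for every $v$.

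The only genuinely delicate point, and the one I would state as a small auxiliary observation before the main argument, is the claim that for quantifier-free $\psi$ the truth of $\psi[v,w]$ is determined by the triple $(\tp{M}{v}, \theta, \tp{M}{w})$, where $\theta$ is the order formula of $(v,w)$ — i.e. that $\Theta$ is a complete case split on the navigational atoms over two variables. This is immediate from the definitions of $\Theta_{words}$ and $\Theta_{trees}$ (they exhaust all mutually exclusive configurations of the binary navigational predicates), but it is what makes the replacement of "there exists a witness $w$" by "there exists a pair $(\theta,\beta)$ with $\ftpA(\theta)(\beta)\neq\vec0$ and a logical implication" legitimate, and likewise what makes the counting in clause~(3) line up exactly. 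Everything else is bookkeeping; no combinatorial or model-transforming idea is needed, which is why the excerpt calls this a straightforward unfolding.
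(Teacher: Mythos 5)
Your proof is correct and follows exactly the route the paper intends: the paper gives no detailed argument, declaring the lemma ``a straightforward unfolding'' of Definitions~\ref{def:realized_ftp} and \ref{def:phiconsistent} and the semantics, and your write-up carries out precisely that unfolding in both directions, including the one point worth making explicit (that $(\tp{M}{v},\theta,\tp{M}{w})$ determines the truth of any quantifier-free formula at $(v,w)$, which justifies both the witness clauses and the modulo bookkeeping).
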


Let $\varphi$ be $\FOTwoMod$ formula in normal form. For proving the upper bounds in
the next sections, we will estimate the size of a model by the following function. 
We define $\str{f}(\varphi)$ as the function, which for a given formula returns 
the total number of $(l_1, l_2, \ldots, l_m)$-full types over the
signature of $\varphi$.
To be precise, $\str{f}(\varphi)$ is equal to
$(2 l_1 l_2 \ldots l_m)^{|\Theta| |\AAA|}$. Note that $\str{f}(\varphi)$ is doubly
exponential in the size of the formula $\varphi$.


\section{Finite words}
\label{sec:word_algo}
In this section we focus our attention on the case of finite words.
We give an alternative proof of \expspace upper bound for the
satisfiability problem of \FOTwoModWords. Originally this result was
proved in \cite{LodayaSreejith-DLT17} by a reduction to unary temporal
logic with modulo counting operators. Here we give a direct algorithm
dedicated to \FOTwoModWords, based on a small model property that we
prove. The advantage of the method is that it allows an extension to
the case of trees and other extensions (e.g., an incorporation of
counting quantifiers of the form $\exists^{\leq k}, \exists^{\geq k}$
is quite obvious).


\subparagraph*{Small model property.} 
We start by proving that satisfiable formulas in \FOTwoModWords\
have  small models. The proof technique is similar to the pumping lemma
known from the theory of finite word automata. 

\begin{lemma} \label{lemma:words_small_model}
Every normal form $\FOTwoModWords$ formula satisfiable over finite words
has a model $\str{W}$ of size bounded by $\str{f}(\varphi)$.
\end{lemma}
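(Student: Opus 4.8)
The plan is to prove this by a pumping argument on the word, collapsing repeated "full-type profiles" of positions. Suppose $\str{W} \models \varphi$ and $\str{W}$ is a minimal model (shortest word). Since the logic can only distinguish positions up to their realized full types, and by Lemma~\ref{lemma:ficonsistent} the only thing a model has to get right is that every realized full type is $\varphi$-consistent, the idea is that if $\str{W}$ is longer than $\str{f}(\varphi)$ — the number of possible full types — then two positions realize the same full type, and one can cut out the infix between them without destroying $\varphi$-consistency.

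First I would make precise what "cutting out an infix" does. Fix positions $p < q$ in $\str{W}$ with $(l_1,\ldots,l_m)\text{--}\ftp{W}{}{p} = (l_1,\ldots,l_m)\text{--}\ftp{W}{}{q}$, and let $\str{W}'$ be the word obtained by deleting positions $p{+}1, \ldots, q$ (keeping all unary labels on surviving positions, and the induced $\leq$ and $\wordsucc$). The key observation is how the full type of a surviving position $v$ changes. For a position $v \le p$: the set of elements strictly after $v$ loses exactly the block $(p,q]$; the contribution of that block to each $\ftpA(\tg)(\alpha)$ and $\ftpA(\tnext)(\alpha)$ is exactly the difference between the $\tg$-data at $p$ and at $q$ under the equal-full-type hypothesis. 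Here is the cleaner way to see it: the multiset of $1$-types appearing strictly to the right of $q$ in $\str{W}$ equals the multiset appearing strictly to the right of $p$ in $\str{W}'$ — because deleting $(p,q]$ makes $q$'s right-neighbourhood become $p$'s new right-neighbourhood. Since full-type-at-$p$ $=$ full-type-at-$q$ already records the $\tg$ and $\tnext$ summaries (the $0/1$ cut and all remainders $r_{l_j}$) and these summaries only depend on that right-multiset, the $\tg$/$\tnext$ components of $v$'s full type are unchanged; the $\tl$/$\tprev$/$\teq$ components are obviously unchanged since nothing to the left of $v$ moved. A symmetric argument handles positions $v > q$ (now $v \ge q$ in $\str{W}'$): their left-multiset is unchanged because the deleted block sits strictly between $p$ and $v$, and the left-summary of $p$ equals that of $q$. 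One must also check the successor relation is consistently patched at the seam: position $p$ becomes the $\wordsucc$-predecessor of what was position $q{+}1$; but since $\ftpA(\tnext)$ agreed at $p$ and $q$, whatever $1$-type requirement the $\tnext$-slot imposed is still met — actually this is automatic from the multiset argument applied to the one-element block, or alternatively one deletes $(p, q]$ rather than $[p,q]$ precisely so that $p$'s successor slot is inherited from $q$'s successor slot. I would spell this out carefully as it is the delicate point.

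With that, every full type realized in $\str{W}'$ is also realized in $\str{W}$ (the map $v \mapsto \ftp{W'}{}{v}$ takes values among the full types realized in $\str{W}$), hence $\varphi$-consistent by Lemma~\ref{lemma:ficonsistent} applied to $\str{W}$; applying the lemma in the other direction, $\str{W}' \models \varphi$. Since $\str{W}'$ is strictly shorter than $\str{W}$, this contradicts minimality unless $\str{W}$ has at most $\str{f}(\varphi)$ positions — because with more than $\str{f}(\varphi)$ positions the pigeonhole principle forces two positions with equal full type. (One should note $\str{W}'$ is non-empty: if $\str{W}$ had length $> \str{f}(\varphi) \ge 1$ we delete a proper infix, so at least $p$ and the former $q{+}1$, or just $p$ itself survive; and the statement's quantifier requirement $n,m \ge 1$ in the normal form plus the existence of some model guarantees we never need the empty word.)

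The main obstacle is the bookkeeping in the previous-to-last paragraph: showing that deleting an infix between two equal-full-type positions genuinely preserves the full type of every remaining position, including the remainders modulo each $l_j$, not just the $0/1$ reachability bits. The subtlety is that a full type stores only a bounded summary (cut value and remainders) of the true right-/left-multiset, so equality of full types at $p$ and $q$ does not mean the two right-multisets are equal as multisets — only that their summaries coincide. The fix is exactly that after deletion the right-multiset of $p$ in $\str{W}'$ is \emph{literally} the right-multiset of $q$ in $\str{W}$ (same set of physical positions), so we never need to compare summaries of different multisets: we transport the summary along an equality of the underlying sets, and separately invoke the $p$–$q$ full-type equality only to conclude that $v$'s \emph{new} summary (computed from $q$'s old data) matches $v$'s \emph{old} one (computed from $p$'s old data). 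Making this sleight of hand rigorous — and stating it once as a small "splicing lemma" reusable for the tree case — is where the real work lies.
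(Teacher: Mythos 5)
Your proposal is correct and follows essentially the same route as the paper's proof: pigeonhole on the (doubly exponentially many) full types, excision of the infix between two positions with equal full types (the paper collapses $u$ and $v$ into one position, you delete $(p,q]$ — the same surgery), the observation that the removed block contributes $0$ modulo each $l_j$ while the $0/1$ bits survive, and then Lemma~\ref{lemma:ficonsistent} in both directions. The only cosmetic difference is that you argue by minimality of the model where the paper iterates the shortening step, and your ``splicing'' bookkeeping is, if anything, spelled out more carefully than in the paper.
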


\begin{proof}
  Consider a satisfiable formula $\varphi \in \FOTwoModWords$ and
  assume that its model $\str{W}$ is a word longer than
  $\str{f}(\varphi)$.  We will show that we can remove some subword
  from $\str{W}$ and obtain a shorter model.  By repeating this
  process, we will finally obtain a model of $\varphi$ with a required
  size.

  By the pigeonhole principle there exist two positions $u, v \in W$
  with equal full-types.  Let $\str{W}'$ be a word obtained from
  $\mathfrak{W}$ by removing all letters from positions between $u$
  and $v$ and collapsing $u$ and $v$ into a single position. Observe
  that since full types of $u$ and $v$ are equal, for all $j$ the
  remainders modulo $l_j$ of the total number of removed $1$-types on
  positions between $u$ and $v$ are equal to $0$.  Note also that due
  to presence of $0$-$1$ part in the definition of a full type, all unique
  $1$-types realized by the structure survive the surgery.  Therefore,
  all full types in $\str{W}$ remain unchanged. Since all these full
  types were $\varphi$-consistent in $\str{W}$, they are also
  $\varphi$-consistent in $\str{W}'$.  As a consequence of
  Lemma~\ref{lemma:ficonsistent}, the word $\str{W}'$ is indeed a
  model of $\varphi$, as expected.
\end{proof}


\subparagraph*{Algorithm.} 

Now we are ready to present an $\expspace$ algorithm for solving the
satisfiability for $\FOTwoModWords$ interpreted over finite words.  We
assume that the input formula $\varphi$ is in normal form. Since
models of $\varphi$ can have doubly-exponential length, we cannot
simply guess a complete intended model.  To overcome this difficulty,
we guess the model on the fly, letter by letter. For each position of
the guessed word we guess its full type and after checking some
consistency conditions described in Definition
\ref{def:phiconsistent}, we verify if it can be linked with the
previous position. This way we never have to store in memory more than
two full types. Since the size of a full type is bounded exponentially
in $\sizeOf{\varphi}$, the whole procedure runs in \expspace. We accept the
input, if the guessing process ends after at most $\str{f}(\varphi)$
steps.

To avoid presentational clutter in the description of our algorithm we
omit the details of two simple tests \textit{is-valid-successor} and
\textit{is-$\varphi$-consistent}. The former takes two full types,
$\ftpB$ and $\ftpA$, and checks if the position of the type $\ftpA$
can indeed have a successor of type $\ftpB$. It can be easily done
by comparing the total number of $1$-types on each relative position
stored in both types.  The latter test takes one full type $\ftpA$ and
checks if it satisfies the conditions described in Definition
\ref{def:phiconsistent}.

\begin{algorithm}[H] \label{algo:sat_test_fo2mod_words}
  \DontPrintSemicolon
  \KwData{Formula $\varphi \in \FOTwoModWords$ in normal form.} 
  \caption{Satisfiability test for $\FOTwoModWords$}
  
  $\text{MaxLength} := \str{f}(\varphi)$
  \tcp*{Maximal length of a model from Lemma \ref{lemma:words_small_model}}
  
  $\text{CurrentPosition} := 0$
  
  \textbf{guess} a full type $\ftpA$
  s.t. $\ftpA(\tl)$ and $\ftpA(\tprev)$ are zero
  \tcp*{Type of the first position}
  
  \textbf{if not} is-$\varphi$-consistent$(\ftpA)$
  \textbf{then reject}
  \tcp*{See Definition \ref{def:phiconsistent}}
  
  \While{$\text{CurrentPosition} < \text{MaxLength}$}{

  \textbf{if} both $\ftpA(\tg)$ and $\ftpA(\tnext)$
  are zero \textbf{then} \textbf{accept}
  \tcp*{Type of the last position}
  
  \textbf{guess} a full type $\ftpB$
  \tcp*{Type of the successor}
  
  \textbf{if not} is-$\varphi$-consistent$(\ftpB)$
  \textbf{then reject}
  \tcp*{See Definition \ref{def:phiconsistent}}
  
  \textbf{if not} is-valid-successor$(\ftpB, \ftpA)$
  \textbf{then reject}
  
  $\ftpA := \ftpB$
  
  $\text{CurrentPosition} := \text{CurrentPosition} + 1$
  }
  
  \textbf{reject}
\end{algorithm}
The correctness of the algorithm above is guaranteed by the following
lemma.

\begin{lemma}\label{lem:wordsCorrect}
  Procedure \ref{algo:sat_test_fo2mod_words} accepts its input
  $\varphi$ if and only if $\varphi$ is satisfiable.
\end{lemma}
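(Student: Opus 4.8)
The plan is to prove the two implications separately, in each case translating between runs of Procedure \ref{algo:sat_test_fo2mod_words} and genuine finite-word models of $\varphi$, with Lemma~\ref{lemma:ficonsistent} as the bridge between model-checking and $\varphi$-consistency of realized full types, and Lemma~\ref{lemma:words_small_model} providing the length bound that makes the finite loop sound.

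For the \emph{``if''} direction, suppose $\varphi$ is satisfiable. By Lemma~\ref{lemma:words_small_model} it has a model $\str{W}$ with at most $\str{f}(\varphi) = \text{MaxLength}$ positions. I would read off the sequence of full types realized at positions $1, 2, \ldots, |W|$ of $\str{W}$ and feed exactly this sequence to the nondeterministic guesses of the procedure. The first position has no predecessor and no strictly-earlier position, so its realized full type has $\ftpA(\tprev)$ and $\ftpA(\tl)$ zero, matching the first \textbf{guess}. By Lemma~\ref{lemma:ficonsistent} every realized full type is $\varphi$-consistent, so all is-$\varphi$-consistent tests pass; consecutive positions of a word satisfy the successor relation, so the is-valid-successor tests (which, as described, merely compare the counts of $1$-types in each relative position of the two adjacent full types) pass as well. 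Finally the last position of $\str{W}$ has no successor and nothing strictly after it, so when \text{CurrentPosition} reaches $|W|-1 < \text{MaxLength}$ the guard ``both $\ftpA(\tg)$ and $\ftpA(\tnext)$ are zero'' fires and the procedure \textbf{accept}s on this run. Hence the procedure accepts $\varphi$.

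For the \emph{``only if''} direction, suppose some accepting run exists. It produces a finite sequence $\ftpA_1, \ftpA_2, \ldots, \ftpA_N$ of full types with $N \le \text{MaxLength}$, where $\ftpA_1$ has $\ftpA_1(\tprev), \ftpA_1(\tl)$ zero, $\ftpA_N$ has $\ftpA_N(\tnext), \ftpA_N(\tg)$ zero, each $\ftpA_i$ passes is-$\varphi$-consistent, and each consecutive pair passes is-valid-successor. From this data I would \emph{construct} a word $\str{W}$ with $N$ positions whose $i$-th position realizes $\ftpA_i$: assign to position $i$ the $1$-type $\alpha_i$ determined by $\ftpA_i(\teq)$, and take the linear order and successor relation to be the obvious ones on $\{1, \ldots, N\}$. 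The point is then to check that the full type actually realized at position $i$ in this $\str{W}$ equals $\ftpA_i$. The is-valid-successor condition forces the information in neighbouring full types to be mutually coherent, and since in a word the elements $\theta$-related to position $i$ are determined positionally (the single predecessor, the single successor, everything strictly before, everything strictly after, and $i$ itself), the $0$-$1$ and modulo counts recorded in $\ftpA_i$ are exactly the counts one computes from the constructed word — this is where one has to be slightly careful and where I expect the only real work to lie. Once the realized full types are identified with the $\ftpA_i$, they are $\varphi$-consistent by hypothesis, so Lemma~\ref{lemma:ficonsistent} gives $\str{W} \models \varphi$, and $\varphi$ is satisfiable.

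The main obstacle is the second direction: one must argue that the local, incremental checks performed by the procedure (consistency of each single full type plus pairwise successor-compatibility) are enough to guarantee that the \emph{global} full type realized at each position of the reconstructed word matches the guessed one — in particular that the modulo-$l_j$ counts over the ``far'' relative positions $\tl$ and $\tg$ come out right. The argument is essentially an induction along the word showing that the $\ftpA_i(\tl)$ component equals the running sum (modulo each $l_j$, together with the $0$-$1$ cut) of the $1$-types of positions $1, \ldots, i-1$, and symmetrically for $\ftpA_i(\tg)$ from the right; the is-valid-successor test is precisely what propagates this invariant from $i$ to $i+1$. This is routine once set up correctly, so I would state it as a short induction rather than belabour the bookkeeping.
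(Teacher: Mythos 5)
Your plan is correct and follows essentially the same route as the paper: the ``if'' direction combines Lemma~\ref{lemma:words_small_model} with guessing the realized full types of a small model, and the ``only if'' direction reconstructs a word from the guessed types and invokes Lemma~\ref{lemma:ficonsistent}. The inductive argument you flag as the real work (that the \emph{is-valid-successor} checks force the realized full types of the reconstructed word to coincide with the guessed ones, including the $\tl$ and $\tg$ modulo counts) is exactly the step the paper's proof passes over in one sentence, so your extra care there is appropriate rather than a deviation.
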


\begin{proof}
  Take an arbitrary satisfiable formula $\varphi \in \FOTwoModWords$.
  Lemma \ref{lemma:words_small_model} guarantees the existence of a
  small model $\str{W}$ of length not exceeding $\str{f}(\varphi)$.
  At each step of the algorithm we can guess exactly the same types as
  in $\str{W}$.  Such types are of course $\varphi$-consistent and
  valid in the sense of \textit{is-valid-successor}, so the procedure
  accepts $\varphi$.
  
  For the opposite direction, assume that Procedure
  \ref{algo:sat_test_fo2mod_words} accepts its input
  $\varphi \in \FOTwoModWords$. Then we can reconstruct the word
  $\str{W}$ from the guessed full-types. Due to fact that all guessed
  full-types are $\varphi$-consistent we have the right number of
  witnesses to satisfy the formula. Additionally,
  \textit{is-valid-successor} function guarantees that two consecutive
  positions are linked correctly. So the guessed structure is indeed a
  word in which every realized type is $\varphi$-consistent. By Lemma
  \ref{lemma:ficonsistent}, the word $\mathfrak{W}$ is a model of
  $\varphi$.
\end{proof}

\section{Finite trees}

In this section we prove the main result of this paper, which is the
following theorem. It is a direct consequence of
Theorems~\ref{th:upper} and \ref{th:lower} below.

\begin{theorem}
  The satisfiability problem for $\FOTwoModTrees$ interpreted over
  finite trees is \twoexptime-complete.
\end{theorem}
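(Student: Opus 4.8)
The plan is to split the statement into the upper and lower bounds, exactly as the authors indicate, and to develop each via the type-based machinery already set up for words.

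\textbf{Upper bound (\twoexptime).} The strategy is to mimic the word case but with the extra subtlety that trees branch. First I would prove a small-model theorem: every satisfiable normal-form $\FOTwoModTrees$ formula $\varphi$ has a model whose size is bounded by some doubly-exponential function of $|\varphi|$ (e.g.\ $\str{f}(\varphi)$ raised to a polynomial, to account for both bounded depth and bounded degree). The idea is the same pumping argument as in Lemma~\ref{lemma:words_small_model}, applied in two directions. Along any root-to-leaf path, if two nodes $u, v$ with $u$ an ancestor of $v$ realize the same full type, contract the segment between them; since the full types agree, the remainders modulo each $l_j$ of the counts of $1$-types removed are $0$, and the $0$--$1$ part guarantees no realized $1$-type disappears, so by Lemma~\ref{lemma:ficonsistent} the result is still a model — this bounds the depth. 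For the degree, a horizontally analogous pumping argument over the sibling order of the children of a fixed node (using $\tsucch, \tlessh$) removes a contiguous block of siblings with matching full types; one must be slightly careful that vertical full-type information of the parent and of the removed subtrees is preserved, which it is, for the same modular-cancellation reason. Iterating both bounds the tree. Then I would give an alternating procedure running in exponential space that guesses the tree top-down, node by node, storing only the full type of the current node together with bounded bookkeeping; existential branching guesses a child's full type and universal branching lets the machine verify each child. At each node one runs \textit{is-$\varphi$-consistent} and a tree-analogue of \textit{is-valid-successor} linking a node to its parent and to its left sibling, and one accepts once all branches reach leaves within the size bound. Since \aexpspace${}={}$\twoexptime, this yields the upper bound; correctness is proved exactly as in Lemma~\ref{lem:wordsCorrect}, reconstructing a tree from an accepting computation and invoking Lemma~\ref{lemma:ficonsistent}.

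\textbf{Lower bound (\twoexptime-hardness).} Here the plan is a generic reduction from an alternating Turing machine working in exponential space, i.e.\ from \aexpspace. The obstacle — and the place where the real work lies — is encoding an exponentially long tape and the alternating acceptance condition into a two-variable formula over trees, where one cannot directly address tape cells. The device is a tiling game: configurations are encoded as branches (or levels) of the tree, tape-cell indices of length $n$ are encoded in binary using $n$ distinguished unary predicates, and the exponentially many cells are distinguished by these bit-predicates rather than by any first-order addressing. Modulo counting quantifiers are the essential ingredient for enforcing global arithmetic constraints — e.g.\ that a configuration node has exactly $2^n$ cell-children, or that bit patterns increment correctly — since such "exactly $2^n$" conditions are expressible succinctly with $\existsmod{\bowtie}{k}{l}$ but not in plain \FOt. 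Existential and universal states of the ATM are simulated by the branching of the tree (an existential configuration has one successor configuration to be guessed, a universal one has all legal successors as children), so an accepting computation of the ATM becomes a finite tree satisfying the formula. I would develop a tailored version of tiling games matching this structure, prove that the ATM accepts iff the associated tiling game has a winning strategy iff the constructed formula is satisfiable over a finite (ranked, even binary-branching, unordered) tree, and observe the reduction is polynomial. This simultaneously shows hardness already for the weakest tree setting, strengthening the result.

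Combining the two bounds gives \twoexptime-completeness, which is the claimed theorem.
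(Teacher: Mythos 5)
Your upper bound is essentially the paper's proof: the same two full-type pumping surgeries (contracting a segment of a root-to-leaf path between two nodes with equal full types, and deleting a block of siblings with equal full types together with the subtrees rooted at them) give a model of depth and degree at most $\str{f}(\varphi)$, and then an alternating exponential-space procedure guesses full types top-down and uses $\aexpspace=\twoexptime$. The only detail you leave under "bounded bookkeeping" is the one that actually carries the modulo information: besides checking $\varphi$-consistency and parent/left-sibling compatibility, the procedure must aggregate, over all guessed children, their $\teq$, $\tsuccv$ and $\tlessv$ components by componentwise sums modulo the $l_j$'s, and check at the end that these aggregates equal the parent's $\tsuccv$ and $\tlessv$ entries; local parent--child and sibling checks alone do not certify the parent's full type. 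With that spelled out, correctness goes through exactly as in the word case.

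The lower bound, however, has a genuine gap. The skeleton (alternating machine in exponential space $\to$ tailored tiling game $\to$ formula, with alternation realized by branching) is right, but you misidentify where modulo counting is needed and omit the idea that makes the reduction work in two variables. In the intended encoding, all tape cells of all configurations of one play lie on a single root-to-leaf path, a configuration being a block of $2^n$ consecutive nodes whose column numbers are given by $n$ bit predicates; the crux is \emph{vertical adjacency}: rows cannot be numbered with polynomially many predicates, so "$y$ is the cell directly above $x$" is not definable, and one instead requires that in each column every non-white color occurs an even number of times, so that the color directly above $x$ is the unique non-white color occurring an odd number of times among the same-column strict ancestors of $x$ --- expressed with $\existsmod{=}{0}{2}$ and $\existsmod{=}{1}{2}$ quantifiers over $y \lessv x$ restricted to equal column number. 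Your sketch instead uses modulo counting for "exactly $2^n$ cell-children" and for bit increments: the latter needs no modulo counting at all (binary counters along $\succv$ are standard in plain \FOt), and the former is not expressible by modulo quantifiers, which constrain only residues, never magnitudes. Moreover, in the "configuration node with cell-children" variant, a cell and the corresponding cell of the successor configuration are in free position, and \FOTwoModDesc has no way to tie such pairs together. Without the cross-row matching mechanism (the parity trick or a substitute), the reduction does not go through.
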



\subsection{Upper bound}

We start by proving the \twoexptime upper bound. As in previous
section, this is done by showing first a small model property and then
an algorithm. The small model property is crucial in the proof of
correctness of the algorithm. Actually, having defined the notion of
the full type, the technique here is a rather straightforward
combination of the technique from previous section and from
\cite{BBCKLMW-tocl}. The main difficulty here was to come up with the
right notion of a full type.

\subsubsection{Small model property}

We demonstrate the small-model property of the logic $\FOTwoModTrees$
by showing that every satisfiable formula $\varphi$ has a tree model
of depth and degree bounded by $\str{f}(\varphi)$. This is done by
first shortening $\downarrow$-paths and then shortening the
$\rightarrow$-paths, as in the proof of
Lemma~\ref{lemma:words_small_model}. 

\begin{theorem}[Small model theorem]
  \label{thm:fo2modtrees_smallmodeltheorem}
  Let $\varphi$ be a normal form $\FOTwoModTrees$ formula.  If
  $\varphi$ is satisfiable then it has a a tree model in which every
  path has length bounded by $\str{f}(\varphi)$ and every vertex has
  degree bounded by $\str{f}(\varphi)$.
\end{theorem}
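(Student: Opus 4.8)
The plan is to turn an arbitrary model $\str{T}\models\varphi$ into one of bounded depth and degree by two independent surgery phases, mirroring the pumping argument of Lemma~\ref{lemma:words_small_model} but carried out first along $\downarrow$-paths and then along $\rightarrow$-paths. The only tool needed is Lemma~\ref{lemma:ficonsistent}: after each surgery it suffices that every realized $(l_1,\ldots,l_m)$-full type is still $\varphi$-consistent, and it is in fact cleanest to prove the stronger statement that every surviving vertex keeps \emph{exactly} the full type it had before.

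\emph{Phase 1 (bounding path length).} If some root-to-leaf path has more than $\str{f}(\varphi)$ vertices, then, since there are only $\str{f}(\varphi)$ full types, two vertices $u$ (an ancestor) and $v$ (a proper descendant) on it satisfy $\ftp{T}{}{u}=\ftp{T}{}{v}$. Replace the subtree rooted at $u$ by the subtree rooted at $v$ (so $v$ takes $u$'s slot, its subtree comes along, and everything in $S_u\setminus S_v$ disappears). I would then check, by cases on where a surviving vertex $z$ sits --- among the ancestors of $u$, off $S_u$ entirely, or inside $S_v$ --- that each component of $\ftp{T}{}{z}$ is unchanged. The \emph{residue} components are preserved because, telescoping $\ftp{T}{}{u}=\ftp{T}{}{v}$ on the downward order formulas $\tsuccv,\tlessv$ (and on $\teq,\tprecv$ to absorb the boundary vertices $u$ and the parent of $u$), the number of occurrences of every $1$-type inside $S_u\setminus S_v$ is $\equiv 0$ modulo every $l_j$; the \emph{$0$--$1$} components are preserved because every $1$-type occurring in $S_u\setminus S_v$ either equals $\tp{T}{u}$ or occurs among the proper descendants of $u$, hence --- the matching bits of $\ftp{T}{}{u}$ and $\ftp{T}{}{v}$ agreeing --- also occurs in $S_v$, at the same relative position with respect to $z$. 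Iterating strictly shrinks the tree, so after finitely many steps every path has length at most $\str{f}(\varphi)$.

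\emph{Phase 2 (bounding degree).} Fix a vertex with more than $\str{f}(\varphi)$ children $c_1\rightarrow\dots\rightarrow c_N$; by pigeonhole $\ftp{T}{}{c_p}=\ftp{T}{}{c_q}$ for some $p<q$. As in the word surgery, delete $c_p,\ldots,c_{q-1}$ \emph{together with their subtrees} and let $c_q$ become the right sibling of $c_{p-1}$. The verification is the tree analogue of the one in Lemma~\ref{lemma:words_small_model}: reading the child sequence as a word, equality of the sibling components ($\tsucch,\tlessh$, resp.\ $\tprech,\tgreath$) of $\ftp{T}{}{c_p}$ and $\ftp{T}{}{c_q}$ forces the number of deleted siblings of each $1$-type to vanish modulo every $l_j$ and leaves every realized sibling $1$-type present. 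The new point, absent from the word case, is that one must also account for the subtrees hanging off $c_p,\ldots,c_{q-1}$, which sit in free position (or as deep descendants) of the surrounding vertices; here the free-position and downward components of $\ftp{T}{}{c_p}$ and $\ftp{T}{}{c_q}$ are what one uses to argue that their $1$-types reappear, at the same relative positions, among the surviving subtrees and that their numbers contribute $0$ modulo every $l_j$. Iterating gives degree at most $\str{f}(\varphi)$, and since Phase 2 never lengthens a path the depth bound survives, so the two bounds together give the theorem.

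\emph{Where the difficulty lies.} Phase 1 is a routine case analysis. The delicate part is Phase 2: contracting a block of siblings also deletes their subtrees, and one must ensure that no full type elsewhere --- in particular the $\tlessv$ and $\tfree$ components of the parent and of the vertices lying in free position of the deleted block --- is disturbed in a way that destroys a $\forall\exists$-witness. This is precisely what the design of the full type (a $0$--$1$ bit \emph{and} residues modulo every $l_j$, recorded per order formula and per $1$-type) is engineered to guarantee; once that notion is in place, the surgery and its bookkeeping proceed as sketched.
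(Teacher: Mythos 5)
Your plan follows the paper's own proof: pigeonhole on $(l_1,\ldots,l_m)$-full types, a vertical surgery replacing the subtree rooted at $u$ by the subtree rooted at a descendant $v$ with $\ftp{T}{}{u}=\ftp{T}{}{v}$, then a horizontal surgery excising a block of siblings together with their subtrees between two children of equal full type, with Lemma~\ref{lemma:ficonsistent} closing each step. Phase 1 is sound: there the deleted region is exactly $S_u\setminus S_v$, and for every $1$-type its count in that region is, modulo each $l_j$, a difference of matching entries of the two equal full types (the $\tsuccv/\tlessv$ entries, and the $\tprecv/\tgreatv$ entries for the survivors inside $S_v$), hence $\equiv 0$; the $0$--$1$ bits transfer by the analogous matching of occurrence bits, since $S_v$ survives and $u$ is replaced by the equally-typed $v$.

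The gap is in Phase 2, at exactly the point you single out as delicate. You assert that the free-position and downward components of $\ftp{T}{}{c_p}$ and $\ftp{T}{}{c_q}$ force the deleted subtrees' $1$-type counts to vanish modulo every $l_j$ and their types to reappear. They do not. For two children $c_p,c_q$ of the same parent, the proper descendants of every \emph{other} child lie in free position with respect to both, so the free components of $c_p$ and $c_q$ differ only in that one contains $\mathrm{desc}(c_q)$ where the other contains $\mathrm{desc}(c_p)$. Equality of the two full types therefore yields only: (i) the sibling roots $c_p,\ldots,c_{q-1}$ have per-$1$-type count $\equiv 0 \pmod{l_j}$, and (ii) $\mathrm{desc}(c_p)$ and $\mathrm{desc}(c_q)$ have equal per-$1$-type counts modulo each $l_j$. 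The subtrees below the \emph{intermediate} siblings $c_{p+1},\ldots,c_{q-1}$ contribute identically to both free components and are completely unconstrained; deleting them can shift the $\tlessv$ component of the parent and of every ancestor, and the $\tfree$ component of every surviving cousin, by a nonzero residue (take a single modulus $l=3$ and one middle sibling carrying one extra leaf of some $1$-type), and can even kill a $0$--$1$ bit if that $1$-type occurs only there. So ``all full types are retained'' fails for this surgery, and Lemma~\ref{lemma:ficonsistent} cannot be invoked; note that the paper's proof compresses this very step into one sentence, so the missing argument is genuinely missing, not merely routine. A standard repair is to pigeonhole on a richer datum: attach to each child $c_i$, besides $\ftp{T}{}{c_i}$, a prefix aggregate recording for every $1$-type an occurrence bit and the residues modulo $l_1,\ldots,l_m$ of the number of vertices in $S_1\cup\cdots\cup S_i$. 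If $c_p$ and $c_q$ agree on this pair, the excised block (roots and subtrees together) has per-$1$-type count $\equiv 0$ modulo each $l_j$ and introduces no lost occurrences, while the matched full types give the $\tsucch/\tprech$ relinking; the price is a degree bound of about $\str{f}(\varphi)\cdot(2\,l_1 l_2\cdots l_m)^{|\AAA|}$ rather than $\str{f}(\varphi)$, which requires adjusting the stated bound (or redefining $\str{f}$) but remains doubly exponential and is harmless for the \twoexptime{} algorithm.
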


\begin{proof} Let $\str{T}$ be a model of $\varphi$.  First we show
  how to shorten $\downarrow$-paths in $\str{T}$. Assume that there
  exists a $\downarrow$-path in $\str{T}$ longer than
  $\str{f}(\varphi)$. Thus, by the pigeonhole principle, there are two
  nodes $u$ and $v$ on this path such that $v$ is a descendant of $u$
  and $\ftp{T}{\varphi}{u} = \ftp{T}{\varphi}{v}$. Consider the tree
  $\str{T}'$, obtained by removing the subtree rooted at $u$ and
  replacing it by the subtree rooted at $v$.
  
  Observe that for all $j$ the remainders modulo $l_j$ of the total
  number of removed vertices is equal to $0$. Additionally, all unique
  full types survive the surgery. Thus, all full types are retained
  from the original tree, so they are $\varphi$-consistent and by
  Lemma \ref{lemma:ficonsistent} the tree $\str{T}'$ is a model of
  $\varphi$. By repeating this  process we  get a tree with
  all $\downarrow$-paths shorter than $\str{f}(\varphi)$.

  Shortening the $\rightarrow$-paths is done in a similar way.  Assume
  that there exists a node $v$ with branching degree greater than
  $\str{f}(\varphi)$. Then there exist two children $u, w$ of $v$,
  with equal full types. Let $\str{T}''$ be a tree obtained by
  removing all vertices between $u$ and $w$ (excluding $u$ and
  including $w$) together with subtrees rooted at them. Again, the
  remainders modulo $l_j$ of the total number of removed vertices are
  all equal to $0$, and all unique types survive the surgery, so all
  full types are retained from $\str{T}$ and thus
  $\varphi$-consistent.  Therefore the tree $\str{T}''$ is a model of
  $\varphi$.  We repeat this process until we get a tree with desired
  branching.
\end{proof}


\subsubsection{Algorithm}

In this section, we design an algorithm to check if a given
$\FOTwoModTrees$ formula $\varphi$ interpreted over finite trees is
satisfiable.  By Lemma \ref{lemma:normalform} we can assume that the
input formula $\varphi$ is given in normal form. Then, by Theorem
\ref{thm:fo2modtrees_smallmodeltheorem}, we can turn our attention to
trees with degree and path length bounded by
$\str{f}(\varphi)$. Recall that $\str{f}(\varphi)$ is
doubly-exponential in $\sizeOf{\varphi}$.

Let us describe the core ideas of the algorithm. It works in
alternating exponential space. Since \aexpspace = \twoexptime, it can
be translated to an algorithm working in \twoexptime.  The algorithm
starts by guessing the full type of the root and then it calls a
procedure that builds a tree with the given full type of the root.

The procedure called on a vertex $v$ guesses on the fly the children
of $v$, starting with the leftmost child and storing in the memory at
most two children at the same time. While guessing a child $v'$, it
checks that its type is $\varphi$-consistent and that it can be
correctly linked to $v$ and to its left sibling (if it exists). Then
the procedure is called recursively on $v'$ to build the subtree
rooted at $v'$, but the recursive call is done in parallel, exploiting
the power of alternation.  The process is continued in the same way,
until the bottom of the tree is reached or the height of the
constructed tree exceeds $\str{f}(\varphi)$.

There is one more point of the algorithm, namely, we have to make sure
that the types of children and their descendants, guessed on the fly,
are consistent with the information stored in their parent's full
type, in particular with $\tsuccv$ and $\tlessv$ components. To handle
the first of them, we can simply compute the union of all $\teq$
components of children, which can be done during the guessing
process. Analogously, to handle the second case, we simply compute the
union of all $\tsuccv$ and $\tlessv$ components of all children's full types.

Below we present a pseudocode of the described procedure.  Similarly
to the case of finite words, we omit the details of obvious methods
for checking consistency. For calculating the union of full-types
mentioned in the previous paragraph, we employ $\oplus$ operation
defined as follows. For given $(l_1, l_2, \ldots, l_m)$-full types
$\ftpA$ and $\ftpB$, the result of $\ftpA \oplus \ftpB$ is a
$(l_1, l_2, \ldots, l_m)$-full type $\ftpG$ such that for all order
formulas $\theta \in \Theta$ and all $1$-types $\alpha \in \AAA$, the
following condition holds:
$$
\ftpGta = \left( \max \left( \pi_0(\ftpAta), \pi_0(\ftpBta) \right), R_1, R_2, \ldots, R_m \right),
$$
where $R_i = r_{l_i}(\pi_i(\ftpAta) + \pi_i(\ftpBta))$.

\begin{algorithm}[htb]
  \DontPrintSemicolon
  \caption{Building a subtree rooted at given node}
   \label{algo:subtree_build}
  \KwData{
    Formula $\varphi \in \FOTwoModTrees$ in normal form,\\
    full type $\ftpA$ of a starting node,
    and current level $\text{Lvl} \in \mathbb{N}$.
  } 
  
  \textbf{if not} is-$\varphi$-consistent$(\ftpA)$
  \textbf{then reject}
  \tcp*{See Definition \ref{def:phiconsistent}}
  
  \textbf{if} $\text{Lvl} \geq \str{f}(\varphi)$
  \textbf{then reject}
  \tcp*{Path too long}
  
  \textbf{if} \label{linia3} $\ftpA(\tsuccv)$ is zero
  \textbf{then accept}
  \tcp*{Last node on the path}  
   
  \textbf{Guess} \label{linia4} the degree $\text{Deg} \in [1, \str{f}(\varphi)]$ of a node\\
  
  \textbf{Guess} the full type $\ftpB$ of the leftmost child
  and check if its a valid leftmost son of $\ftpA$

  $O_{\tsuccv} := \ftpB(\teq)$
  \tcp*{Types  of children guessed so far}
  
  $O_{\tlessv} := \ftpB(\tsuccv) \oplus \ftpB(\tlessv)$
  \tcp*{Types  of  descendants guessed so far}

  \While{$\text{Deg} > 1$}{
      Run in parallel $\text{Procedure}$ \ref{algo:subtree_build} on
      $(\varphi, \ftpB, \text{Lvl}+1)$
      \tcp*{Alternation here}

    \textbf{Guess} a full type $\ftpG$ of the right brother
      of $\ftpB$ and check consistency with $\ftpA$
            
      $O_{\tsuccv} := O_{\tsuccv} \oplus \ftpG(\teq)$, 
      $O_{\tlessv} := O_{\tlessv} \oplus
        \ftpG(\tsuccv) \oplus \ftpG(\tlessv)$ 
      \tcp*{Updating obligations}

      $\ftpB := \ftpG$,
      $\text{Deg} := \text{Deg} - 1$
  }
  
  Run in parallel $\text{Procedure}$ \ref{algo:subtree_build} on
      $(\varphi, \ftpB, \text{Lvl}+1)$
      \tcp*{Last child}  

  \textbf{if} $\ftpB(\tsucch)$ is not zero
  \textbf{then reject}
  \tcp*{Not valid last node on $\rightarrow$-path.}
  
  \textbf{if} \label{linia15} $\ftpA(\tsuccv) = O_{\tsuccv}$ and
    $\ftpA(\tlessv) = O_{\tlessv}$
  \textbf{then accept else reject}

\end{algorithm}

\begin{algorithm} 
  \DontPrintSemicolon
  \KwData{Formula $\varphi \in \FOTwoModTrees$ in normal form.} 
  \caption{Satisfiability test for $\FOTwoModTrees$}
  \label{alg:sat_trees}
  \textbf{guess} \label{proc:linia1} a full type $\ftpA$
  s.t. $\ftpA(\tprecv)$ is zero
  \tcp*{Type of the root}
  
  Run $\text{Procedure}$ \ref{algo:subtree_build} on
      $(\varphi, \ftpA, 1)$

\end{algorithm}

The following lemma guarantees the correctness of the algorithm and
leads directly to the main result of this section.
\begin{lemma} \label{lem:treesCorrect}
Procedure \ref{alg:sat_trees} accepts its input formula
$\varphi \in \FOTwoModTrees$ if and only if $\varphi$ is satisfiable
over  finite trees.  
\end{lemma}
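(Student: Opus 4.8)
The plan is to prove the two implications separately, following the proof of Lemma~\ref{lem:wordsCorrect}, with two wrinkles absent from the word case: the procedure is alternating, so an accepting run is itself a finite tree of configurations all of whose leaves accept; and the $\tsuccv$- and $\tlessv$-components of a node's full type cannot be certified by inspecting a single child, which is exactly why Procedure~\ref{algo:subtree_build} maintains the accumulators $O_{\tsuccv}$, $O_{\tlessv}$ and compares them against $\ftpA$ only after every child has been processed.

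For the direction ``$\varphi$ satisfiable $\implies$ the procedure accepts'', I would start from a model $\str{T}$ of $\varphi$ supplied by Theorem~\ref{thm:fo2modtrees_smallmodeltheorem}, whose paths have length at most $\str{f}(\varphi)$ and whose degrees are at most $\str{f}(\varphi)$, and exhibit the accepting run that mimics $\str{T}$: the top-level call guesses the full type realized by the root, and each call of Procedure~\ref{algo:subtree_build} on a node $v$ guesses $\mathrm{Deg}$ to be the true degree of $v$ and guesses the children's full types to be those realized in $\str{T}$, from left to right. Every \emph{is-$\varphi$-consistent} test then passes by Lemma~\ref{lemma:ficonsistent} (all realized full types are $\varphi$-consistent because $\str{T}\models\varphi$); every linking test passes because the guessed full types are genuinely realized in a tree; the path- and degree-bound tests never fire; and the final equalities $\ftpA(\tsuccv)=O_{\tsuccv}$, $\ftpA(\tlessv)=O_{\tlessv}$ hold because $\oplus$ computes exactly the $\tsuccv$- and $\tlessv$-components of $v$ --- the children contribute their $\teq$-components to the $\tsuccv$-component, and the non-child descendants of $v$ are precisely the children together with the non-child descendants of the children, i.e.\ $\bigoplus_{c}\bigl(\ftp{T}{\varphi}{c}(\tsuccv)\oplus\ftp{T}{\varphi}{c}(\tlessv)\bigr)$. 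Hence every branch of the computation tree accepts.

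For the converse, I would fix an accepting run --- a finite tree of configurations all of whose leaves accept --- and build a finite tree $\str{T}$ whose nodes are the invocations of Procedure~\ref{algo:subtree_build}, with $v'$ a child of $v$ exactly when the invocation at $v'$ was spawned in parallel by the invocation at $v$ (siblings ordered as they were guessed), with each node carrying the unique $1$-type $\alpha$ for which $\ftpA(\teq)(\alpha)=\vec{1}$ in the full type $\ftpA$ guessed there, and with $\succv,\lessv,\succh,\lessh$ interpreted in the obvious way. The crux is the claim that for every node $v$ the full type realized by $v$ in $\str{T}$ coincides with the full type $\ftpA_v$ guessed at that invocation; granting it, every realized full type is $\varphi$-consistent (each survived \emph{is-$\varphi$-consistent}), so by Lemma~\ref{lemma:ficonsistent} $\str{T}\models\varphi$ and $\varphi$ is satisfiable.

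The claim I would prove by induction over $\str{T}$, and this is where the real work --- and the main obstacle --- lies. The $\teq$-component is correct by construction; the $\tsuccv$- and $\tlessv$-components are correct because the final test forces them to equal $O_{\tsuccv}$ and $O_{\tlessv}$, which by the inductive hypothesis on the children are the true values; the upward components $\tprecv,\tgreatv$ propagate down from the parent through the leftmost-son and valid-brother linking tests (using the parent's label and its inductively correct upward components); and the horizontal components $\tsucch,\tprech,\tlessh,\tgreath$ propagate along the sibling list of $v$'s parent, with the leftmost-son test and the last-child test ($\ftpB(\tsucch)$ zero) fixing the two ends and the closure conditions of Definition~\ref{def:full_type} disposing of the companion zero components $\tgreath$ and $\tlessh$. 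The delicate component is the free-position one, $\ftpA_v(\tfree)$: it counts the elements sitting in subtrees hung off strict ancestors of $v$ via siblings of those ancestors, together with the proper descendants of $v$'s own siblings, and no single linking test names this set. Its correctness has to be extracted from the fact that the linking tests, by comparing for each relative position and each $1$-type the \emph{totals} stored in adjacent full types, together with the accumulator $O_{\tlessv}$, force the identity relating $\ftpA_v(\tfree)$ to the $\tfree$- and horizontal components of $v$'s parent and to the descendant components of $v$'s siblings --- so that, once all other components are known to be correct, $\ftpA_v(\tfree)$ is pinned down as well, with the base case being the root, where the free component must vanish. Verifying that the ``obvious'' consistency checks really do enforce exactly this identity is the step I would treat most carefully.
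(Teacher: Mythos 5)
Your proposal follows essentially the same route as the paper's proof: the forward direction guesses the full types realized in the small model given by Theorem~\ref{thm:fo2modtrees_smallmodeltheorem}, and the converse reconstructs a tree from an accepting run, argues that guessed and realized full types coincide (with the $\tlessv$-obligations discharged through the accumulators and the final equality test, and line~\ref{linia3} closing them off at the leaves), and concludes via Lemma~\ref{lemma:ficonsistent}. You are in fact more explicit than the paper about the delicate $\tfree$-component of the coincidence claim, which the paper subsumes under its informal obligation-passing argument and the omitted ``obvious'' consistency checks.
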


\begin{proof}
Assume that a given normal form formula $\varphi \in \FOTwoModTrees$
has a tree model $\str{T}$. Then by Theorem
\ref{thm:fo2modtrees_smallmodeltheorem}
this tree can be transformed into a tree model $\str{T}'$ with both
branching and  path length bounded by $\str{f}(\varphi)$. At each step
of the algorithm we can guess exactly the same full-types as in $\str{T}'$. 
Guessed types are $\varphi$-consistent and do not violate any of the 
propagation conditions given in the algorithm.
Hence $\varphi$ is accepted by the Procedure~\ref{alg:sat_trees}.

For the opposite direction assume that Procedure \ref{alg:sat_trees}
accepts the input formula $\varphi$. Below we construct a tree
structure $\str{T}$ and show that it is indeed a model of $\varphi$.
We start by constructing the the root and labeling it with unary
symbols as required by the 1-type $\ftpA(\teq)$ guessed in line
\ref{proc:linia1} of Procedure~\ref{alg:sat_trees}.  Then we construct
as many children as guessed by Procedure~\ref{algo:subtree_build} in
line~\ref{linia4}, and recursively construct the subtrees rooted at
them. At each step of the construction we take care that the types
realised by the constructed nodes coincide with the types guessed by
the procedure. The input type $\ftpA$ in Procedure~\ref{alg:sat_trees}
forms a~kind of obligation: we must construct a node that realises
this type. Most components of this obligation are fulfilled
immediately by the construction of the children, the $\tlessv$
component is an exception. Partly it is fulfilled directly by
$\tsuccv$ components of the children, partly it is passed in $\tlessv$
components of the guessed types of the children as further obligations
to be fulfilled in future. Line~\ref{linia15} guarantees local
consistency between obligations of the current node and its
children. Line \ref{linia3} guarantees that there are no more
obligations at the end of each path.

The coincidence between guessed and realised types guarantees that the
realised types are $\varphi$-consistent, and thus by Lemma
\ref{lemma:ficonsistent} the constructed tree is indeed a model of
$\varphi$.
\end{proof}

\begin{theorem}
  \label{th:upper}
  The satisfiability problem for $\FOTwoModTrees$ interpreted over
  finite trees is in \twoexptime.
\end{theorem}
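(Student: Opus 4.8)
The plan is to assemble Theorem~\ref{th:upper} from the pieces already developed in this section, so the proof is essentially a bookkeeping argument about running times. First I would invoke Lemma~\ref{lemma:normalform} to reduce, in polynomial time, an arbitrary input formula to an equisatisfiable normal form formula $\varphi$; this costs only a polynomial blow-up, so it does not affect the final complexity class. Then I would appeal to Lemma~\ref{lem:treesCorrect}, which tells us that Procedure~\ref{alg:sat_trees} accepts $\varphi$ exactly when $\varphi$ is satisfiable over finite trees. Hence it remains only to bound the resources used by Procedures~\ref{alg:sat_trees} and~\ref{algo:subtree_build}.

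Next I would carry out the resource analysis. The key observation is that a single $(l_1,\ldots,l_m)$-full type has size $|\Theta|\cdot|\AAA|\cdot O(\log(l_1\cdots l_m))$, which is exponential in $\sizeOf{\varphi}$ (since $|\AAA|$ is exponential in $|\tau|$ and the $l_j$ are encoded in binary). At any moment Procedure~\ref{algo:subtree_build} keeps in memory only a constant number of full types ($\ftpA$, $\ftpB$, $\ftpG$, and the running unions $O_{\tsuccv}$, $O_{\tlessv}$), the level counter $\text{Lvl}$ and the degree counter $\text{Deg}$, both of which are at most $\str{f}(\varphi)$, hence representable in exponentially many bits. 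The auxiliary operations—the tests \textit{is-$\varphi$-consistent}, the leftmost/right-sibling consistency checks, and the $\oplus$ operation—are all straightforward manipulations of these exponential-size objects and run in exponential space (indeed exponential time). Thus each invocation of the procedure uses only exponential working space, and I would stress that, because the recursive calls are spawned \emph{in parallel} via alternation rather than stacked on a call stack, the space used by one branch of the computation does not accumulate along the path: the whole alternating computation runs in space exponential in $\sizeOf{\varphi}$.

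Finally I would conclude with the complexity-class identification: the procedure witnesses that the satisfiability problem for $\FOTwoModTrees$ is in \aexpspace, and by the standard equality $\aexpspace = \twoexptime$ (Chandra--Kozen--Stockmeyer) this gives membership in \twoexptime, completing the proof. I do not expect any genuine obstacle here—the real content was already spent on the small model theorem (Theorem~\ref{thm:fo2modtrees_smallmodeltheorem}) and on the correctness argument (Lemma~\ref{lem:treesCorrect}); the only mild subtlety to get right is the point just emphasised, namely that alternation is what keeps the space bound at a single exponential despite paths of doubly-exponential length, so I would make sure that is stated explicitly rather than glossed over.
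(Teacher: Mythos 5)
Your proposal is correct and follows essentially the same route as the paper: reduce to normal form via Lemma~\ref{lemma:normalform}, rely on Lemma~\ref{lem:treesCorrect} for correctness of Procedure~\ref{alg:sat_trees}, observe that each branch of the alternating computation stores only a constant number of exponential-size full types plus counters bounded by $\str{f}(\varphi)$ (hence exponentially many bits), and conclude via $\aexpspace=\twoexptime$. The paper leaves this resource accounting implicit in the algorithm's description, whereas you spell it out, including the key point that alternation prevents space from accumulating along doubly-exponentially long paths; this is exactly the intended argument.
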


\subsection{Lower bound}
In this section we prove that the satisfiability of \FOTwoModDesc is
\twoexptime-hard. We exploit here the fact that \twoexptime=\aexpspace
and provide a generic reduction from \aexpspace. This is done in two
steps. First we translate computations of alternating Turing machines
to winning strategies in (our version of) tiling games. These
strategies are then encoded as trees and their existence is translated
to the satisfiability problem in \FOTwoModDesc.
\subsubsection{Alternating Turing machines.}
\label{sec:atm}
Let us first fix the notation concerning alternating Turing
machines. By an alternating Turing machine we understand a tuple of
the form $\tuple{Q,E,q_0,q_A,q_R,\Sigma,\Gamma,\delta}$, where $Q$ is
a finite set of states; $E\subseteq Q$ is a set of \emph{existential}
states; $q_0\in E$ is the \emph{initial} state; $q_A,q_R\in Q$ are
respectively \emph{accepting} and \emph{rejecting} states; $\Gamma$ is
a finite \emph{alphapet} of \emph{tape} symbols;
$\Sigma\subseteq \Gamma$ is an \emph{input} alphabet;
$\delta\subseteq Q\times\Gamma\times\Gamma\times\{L,N,R\}\times Q$ is
the \emph{transition} relation. States in $Q\setminus E$ are called
\emph{universal}. While in general alternating machines may have
multiple accepting and rejecting states, we assume here without loss of
generality, that there is exactly one accepting and exactly one
rejecting state. In the following we will use, also without loss of
generality, other simplifications. We assume that for each non-final
(that is, non-accepting and non-rejecting) state $q$ and for each tape
symbol $a\in \Gamma$ there are exactly two transitions of the form
$\tuple{q,a,\_,\_,\_}$ in $\delta$ and that these transitions are
alternating between universal and existential states, so that 
$\delta\subseteq \big(E\times\Gamma\times\Gamma\times\{L,N,R\}\times
(Q{\setminus} E)\big)\cup \big((Q{\setminus}
E)\times\Gamma\times\Gamma\times\{L,N,R\}\times E\big)$.
The machine has one tape that initially contains an input word. We are
interested in machines working in exponential space; again without loss
of generality we assume that given an input word of length $n$ such
a~machine never uses more then $2^n-1$ tape cells, so its
configuration is described by a~word of length exactly $2^n$ in the
language $\Gamma^*Q\Gamma^+$. A~configuration of the form $w_1qw_2$
means that the tape of the machine contains the word $w_1w_2$, the
machine is in state $q$ with its head scanning the first letter of
$w_2$. The initial configuration with an input word $w$ is
$q_0wB^{2^n{-}n{-}1}$ where $B\in\Gamma$ is the blank symbol. The machine moves
from a configuration $w_1qaw$ to its successor in accordance with the
transition relation: it chooses a transition of the form
$\tuple{q,a,b,m,q'}$ from $\delta$ (recall that there should be
exactly two such transitions if $q$ is not final), replaces the symbol
$a$ with $b$ on the tape, changes its state to $q'$ and moves the head
to the left (if $m=L$), to the right (if $m=R$) or leaves the head in the
same position (if $m=N$). The input word is \emph{accepted} if the
initial configuration \emph{leads to acceptance}, where leading to
acceptance is defined recursively as follows: accepting configurations
(i.e., configurations containing the accepting state) lead to
acceptance; an existential configuration leads to acceptance if there
exists its successor configuration that leads to acceptance; a
universal configuration leads to acceptance if all its successor
configurations lead to acceptance.

\subsubsection{Tilling Games}
Corridor tiling games, aka rectangle tiling games~\cite{Chlebus86}
provide a well-known technique for proving lower bounds in space
complexity. Here we develop our own version of these games that is
able to encode alternating Turing machines from previous section and
to be encoded in \FOTwoModDesc.

By a tiling game we understand a tuple of the form
$\tuple{C,T_0,T_1,n,\tuple{t_0,\ldots, t_{n}},\Box, L}$, where $C$ is
a finite set of \emph{colors}; $T_0,T_1\subseteq C^4$ are two sets of
\emph{tiles} (these two sets are not meant to be disjoint); $n$ is a
natural number; $\tuple{t_0,\ldots, t_{n}}$ is an \emph{initial}
sequence of $n+1$ tiles; $\Box\in C$ is a special color called
\emph{white}; $L\subseteq T_0{\cup} T_1$ is a set of tiles allowed in
the \emph{last} row.

We think of a tile $\tuple{a,b,c,d}\in C^4$ as of a square consisting
of four smaller squares, colored respectively with colors $a,b,c$ and
$d$ (see Figure~\ref{fig:tiles}). In the following we will require
that adjacent tiles have matching colors, both horizontally and
vertically. Formally, we define the horizontal adjacency relation
$H=\{\tuple{\tuple{a,b,c,d},\tuple{b,e,d,f}}\mid a,b,c,d,e,f,\in C\}$
and the vertical adjacency relation
$V=\{\tuple{\tuple{a,b,c,d},\tuple{c,d,e,f}}\mid a,b,c,d,e,f,\in C\}$.
We define a \emph{correctly tiled corridor} to be a rectangle of size
$k \times 2^n$ for some $k\in\N$ filled with tiles, with all
horizontally adjacent tiles in $H$ and all vertically adjacent tiles
in $V$, with first row starting with tiles $t_0,\ldots, t_{n}$ and
padded out with white tiles, with last row built only from tiles in
$L$, and with all edges being white.  Figure~\ref{fig:tiles} shows an
example of correctly tiled corridor for $n=2$ and initial tiles
$\tuple{\Box,\Box,\Box,a}, \tuple{\Box,\Box,a,b},
\tuple{\Box,\Box,b,\Box}$.

\newcommand{\sqtile}[4]{
\begin{tabular}[c]{|@{\hspace{1mm}}c@{\hspace{1mm}}|@{\hspace{1mm}}c@{\hspace{1mm}}|}
  \hline \makebox[1ex]{#1}&\makebox[1ex]{#2}\\ \hline
  \makebox[1ex]{#3}&\makebox[1ex]{#4}\\\hline    
\end{tabular}}

\begin{figure}[htb]
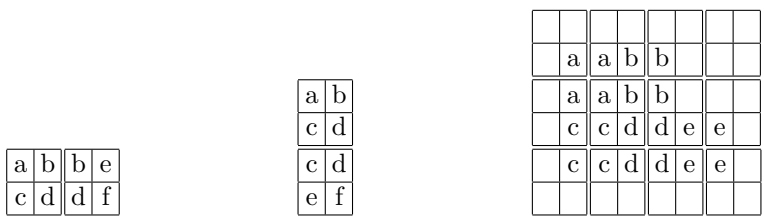

\hfill
  \begin{minipage}[b]{1.5cm}
    \sqtile{a}{b}{c}{d}\hspace{-2pt}\sqtile{b}{e}{d}{f}
  \end{minipage}\hfill
  \begin{minipage}[b]{.75cm}
        \sqtile{a}{b}{c}{d}\\ \sqtile{c}{d}{e}{f}
      \end{minipage}\hfill
      \begin{minipage}[b]{3cm}
            \sqtile{~}{~}{~}{a}\hspace{-2pt}\sqtile{~}{~}{a}{b}\hspace{-2pt}%
            \sqtile{~}{~}{b}{~}\hspace{-2pt}\sqtile{~}{~}{~}{~}\hspace{-2pt}\\
            \sqtile{~}{a}{~}{c}\hspace{-2pt}\sqtile{a}{b}{c}{d}\hspace{-2pt}%
            \sqtile{b}{~}{d}{e}\hspace{-2pt}\sqtile{~}{~}{e}{~}\hspace{-2pt}\\
            \sqtile{~}{c}{~}{~}\hspace{-2pt}\sqtile{c}{d}{~}{~}\hspace{-2pt}%
            \sqtile{d}{e}{~}{~}\hspace{-2pt}\sqtile{e}{~}{~}{~}\hspace{-2pt}
      \end{minipage}\hfill~
  \caption{Horizontally  adjacent tiles,  vertically adjacent tiles, and a correct tiling}
  \label{fig:tiles}
\end{figure}

Consider the following game. There are two players, \emph{Prover}
(also called the \emph{existential} player) and \emph{Spoiler} (also
called the \emph{universal} player). The task of Prover is to
construct a correct tiling; the task of Spoiler is to prevent Prover
from doing this. At the beginning they are given the initial row
$t_0,\ldots,t_n, (\Box^4)^{2^n -n-1}$. In each move the players
alternately choose one of the two sets $T_0$ or $T_1$ and build one
row consisting of $2^n$ tiles from the chosen set. The first move is
performed by Prover. Prover wins if after a finite number of moves
there is a correctly tiled corridor, otherwise Spoiler wins. There are
two possibilities for Spoiler to win: either Prover cannot make a move
while the last constructed row is not in $L^*$ or the game lasts
forever.

We say that a tiling game $\tuple{\_,T_0,T_1,\_,\_ ,\_, \_}$ is
\emph{well-formed} if for any play of the game and for any partial
(i.e., non-complete) tiling constructed during this play, exactly one
new row can be correctly constructed from tiles in $T_0$ and exactly
one from tiles in~$T_1$. In other words, every possible move in any
play of the game is fully determined by the choice of the set of
tiles, $T_0$ or $T_1$.

\subsubsection{From Alternating Machines to Tiling Games}

The first step of the lower-bound proof for \FOTwoModDesc is a
reduction from alternating machines to tiling games.  Actually, we
have defined our version of tiling games in such a way that the proof
of the following theorem becomes a routine exercise.

\begin{theorem} \label{th:TMtoGame}
  For all alternating Turing machines $M$ working in exponential space
  and all input words $w$ there exists a well-formed tiling game of
  size polynomial in the sizes of $M$ and $w$ such that Prover has a
  winning strategy in the game if and only if $w$ is accepted by $M$.
\end{theorem}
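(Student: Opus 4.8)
The plan is to encode a computation tree of the alternating Turing machine $M$ on input $w$ as a play of a tiling game, row by row. The key observation is that a single row of $2^n$ tiles is wide enough to store a full configuration of $M$ (which has length exactly $2^n$), since each tile carries four colors and adjacent tiles overlap in two of them; so I can use the top half of each tile-square to hold the ``current'' configuration symbol and the bottom half to hold the ``next'' configuration symbol, with the $H$ and $V$ adjacency relations enforcing that these overlap correctly between neighbouring tiles and between consecutive rows. Thus each newly constructed row in the game will correspond to one computation step of $M$.

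First I would set the parameter $n$ of the game to be (essentially) the input length, so that rows have width $2^n$, matching the space bound of $M$. The color set $C$ will consist of the white color $\Box$ together with symbols built from $\Gamma$, from $Q\times\Gamma$ (to mark the head position and state), plus a constant amount of bookkeeping markers (left/right end markers, a ``this cell was just written'' flag needed to propagate a move of the head by one cell, and possibly a counter bit to locate the $2^n$ boundary — though since the game already hands out the padded initial row and the well-formedness condition fixes everything, most of this can be kept very lightweight). The initial tile sequence $\tuple{t_0,\ldots,t_n}$ encodes the initial configuration $q_0 w B^{2^n-n-1}$; the white padding supplied by the game rules fills the rest. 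The set $L$ of last-row tiles is the set of tiles whose configuration-symbol is the accepting state $q_A$ (together with the surrounding white/tape tiles), so that a play ends successfully exactly when Prover has reached an accepting configuration.

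The two tile sets $T_0$ and $T_1$ are where the alternation of $M$ gets simulated. By the simplifying assumptions on $M$, every non-final state has exactly two outgoing transitions; I fix an ordering of them and let $T_0$ be the set of tiles consistent with taking ``transition 0'' at the head cell (and the identity elsewhere on the tape), and $T_1$ the set consistent with ``transition 1''. I must verify \emph{well-formedness}: given any partial tiling (hence any configuration in its last row), there is exactly one correct next row built from $T_0$ and exactly one from $T_1$ — this holds because a configuration determines its head position, the tile there forces which transition is applied, and every other tile is forced to copy its cell. When the move in the game is Prover's (existential player), choosing $T_0$ or $T_1$ corresponds to $M$ existentially selecting one of its two successor configurations; when the move is Spoiler's (universal player), Spoiler's choice of $T_0$ or $T_1$ corresponds to $M$ universally branching, and Prover must be able to win whichever branch is taken. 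Since Prover moves first and the players alternate, and we arranged (via the assumption that $\delta$ strictly alternates existential and universal states) that consecutive configurations alternate between existential and universal, the turn order of the game matches the quantifier alternation in ``leads to acceptance''. A straightforward induction on the definition of ``leads to acceptance'' then shows Prover has a winning strategy iff the initial configuration leads to acceptance iff $w$ is accepted by $M$.

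The main obstacle I expect is not the high-level correspondence but the \emph{local} encoding of a single transition step with only four colors per tile and only the horizontal/vertical matching relations available — in particular, correctly propagating a head move by one cell requires the cell that the head leaves and the cell the head enters to ``communicate'', which with purely local matching forces a small marker to be passed between vertically adjacent tiles, and one has to check no spurious tilings arise (e.g. two heads, or a head appearing from nowhere). This is the classic subtlety in tiling-based TM simulations and is handled by a careful but routine case analysis on the finitely many tile shapes; I would also double-check that the ``all edges white'' and ``first row = initial tiles padded with white'' conditions interact correctly with the end-of-tape markers, and that the polynomial size bounds ($|C|,|T_0|,|T_1|,n$ all polynomial in $|M|+|w|$) are met — the latter is immediate since $n$ is linear in $|w|$ and the tile sets are quadratic in $|\Gamma|\cdot|Q|\cdot|\delta|$.
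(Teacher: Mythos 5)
Your proposal is correct and follows essentially the same route as the paper: rows of width $2^n$ encode configurations via overlapping $2\times 2$ tiles, $T_0$/$T_1$ encode the two transitions available in each non-final configuration (with copy tiles elsewhere), the initial row and $L$ encode the initial and accepting configurations, well-formedness follows because the head tile determines the whole next row, and the strategy correspondence is an induction on ``leads to acceptance.'' The only cosmetic difference is that the extra bookkeeping markers you anticipate (head-movement flags, end markers, counters) are unnecessary in the paper's construction, since the two-cell width of each tile already lets the head's one-cell move be expressed within a single tile.
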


\begin{proof}[Proof of Theorem \ref{th:TMtoGame}] Assume that
  $M=\tuple{Q,E,q_0,q_A,q_R,\Sigma,\Gamma,\delta}$ satisfies all
  restrictions made in Section~\ref{sec:atm}. Let $n$ be the length of
  $w$ and let $w=w_0\ldots w_{n-1}$. We construct a game
  $\tuple{C,T_0,T_1,n,\tuple{t_0,\ldots, t_{n}},\Box, L}$, where
  $C=\Gamma\cup (Q\times\Gamma)\cup \{\Box\}$. The initial sequence is
  defined as follows: $t_0=\tuple{\Box,\Box,\Box,\tuple{q_0,w_0}}$,
  $t_1=\tuple{\Box,\Box,\tuple{q_0,w_0},w_1}$,
  $t_i=\tuple{\Box,\Box,w_{i-1},w_i}$ for $i\in\{2,\ldots,n-1\}$ and
  $t_n=\tuple{\Box,\Box,w_n,\Box}$. It simply defines the upper white
  edge of the constructed corridor with the initial configuration of
  $M$ stored in lower-right squares of tiles $t_0,\ldots,t_{n-1}$,
  where both the current state $q_0$ of $M$ and the scanned symbol
  $w_0$ are encoded as one color $\tuple{q_0,w_0}$. The set $L$
  contains all tiles used in representations of accepting
  configurations of $M$, with the lower edge white. Formally,
\[\begin{array}{rcccccc}
  L&=&
  (\Gamma\cup(\{ q_A\}{\times} \Gamma)\cup\{\Box\})&\times&(\Gamma\cup\{\Box\})&\times\{\Box\}\times\{\Box\}\\
 &\cup & (\Gamma\cup\{\Box\})&\times&(\Gamma\cup(\{ q_A\}{\times}\Gamma)\cup\{\Box\})&\times\{\Box\}\times\{\Box\}.
\end{array}
\]

Now we define the two sets $T_0$ and $T_1$. These are two sets of
tiles used for encoding of transitions of $M$; we use here two sets to
separate the encodings of the two transitions applicable in each
non-final configuration.  Both sets $T_0$ and $T_1$ contain tiles of
the form $\tuple{a,b,a,b}$ for all $a,b\in\Gamma\cup\{\Box\}$. This
allows copying the content of a tape cell of $M$ from one
configuration to its successor in positions distant from the position
of the head of $M$. Both sets $T_0$ and $T_1$ contain tiles
corresponding to final configurations: all tiles from L and a similar
set of tiles for rejecting configurations, with $q_A$ replaced by
$q_R$. To finish the construction of sets $T_0$ and $T_1$, for all
non-final states $q$ and all tape symbols $a\in\Gamma$, the set $T_0$
(respectively, $T_1$) contains the tiles corresponding to the first
(respectively, the second) transition of the form
$\tuple{q,a,\_,\_,\_}$ in $\delta$, defined as follows.

For all $x,y,z\in \Gamma\cup\{\Box\}$ there are three tiles
corresponding to a transition of the form $\tuple{q,a,b,L,q'}$, namely
$\tuple{x,y,x,\tuple{q',y}}$, $\tuple{y,\tuple{q,a},\tuple{q',y},b}$
and $\tuple{\tuple{q,a},z,b,z}$ (see Figure~\ref{fig:trans}).  Tiles
corresponding to a transition of the form $\tuple{q,a,b,R,q'}$ are
$\tuple{x,\tuple{q,a},x,b}$, $\tuple{\tuple{q,a},y,b,\tuple{q',y}}$
and $\tuple{y,z,\tuple{q',y},z}$ for all
$x,y,z\in \Gamma\cup\{\Box\}$.  Tiles corresponding to a transition of
the form $\tuple{q,a,b,N,q'}$ are
$\tuple{x,\tuple{q,a},x,\tuple{q',b}}$ and
$\tuple{\tuple{q,a},y,\tuple{q',b},y}$, for all
$x,y\in \Gamma\cup\{\Box\}$.

This ends the definition of the game. It is easy to see that its size
is polynomial in sizes of $M$ and $w$. Now we show that the
constructed game is well-formed.  This is because in any reachable
configuration in any play of the game, the lower edge of the so-far
constructed tiling encodes a configuration of $M$ and thus there are
precisely two (adjacent) occurrences of a color from $Q\times
\Gamma$. Consider the two tiles $t_1=\tuple{\_,\_,\_,\tuple{q,a}}$ and
$t_2=\tuple{\_,\_,\tuple{q,a},\_}$ with these occurrences. If $q$ is
final then the only possibility is to finish the play by completing
the white lower edge, so let us assume that $q$ is not final. Then, by
assumption on $M$, there are precisely two transitions of the form
$\tuple{q,a,\_,\_,\_}$ in $\delta$ and the chosen set contains tiles
corresponding to only one of these transitions. Therefore it contains
precisely one tile $t_1'$ and precisely one $t_2'$ such that
$\tuple{t_1',t_2'}\in H$, $\tuple{t_1,t_1'}\in V$ and
$\tuple{t_2,t_2'}\in V$, so the player has to choose these two
tiles. Then the rest of the row is determined by the chosen two tiles,
and again the lower edge of the constructed tiling correctly encodes a
configuration of $M$; moreover, this configuration is a successor of
the configuration encoded in the previous row.

\renewcommand{\sqtile}[4]{
\begin{tabular}[c]{|@{\hspace{1mm}}c@{\hspace{1mm}}|@{\hspace{1mm}}c@{\hspace{1mm}}|}
  \hline \makebox[2ex]{$#1$}&\makebox[2ex]{$#2$}\\ \hline
  \makebox[2ex]{$#3$}&\makebox[2ex]{$#4$}\\\hline    
\end{tabular}}

\begin{figure}[htb]
\hfill
  \begin{minipage}[b]{3.25cm}
\sqtile{x}{y}{x}{{q'\!y}}\hspace{-5pt}
\sqtile{y}{{qa}}{{q'\!y}}{b}\hspace{-5pt}
\sqtile{{qa}}{z}{b}{z}
  \end{minipage}\hfill
  \begin{minipage}[b]{3.25cm}
     \sqtile{x}{qa}{x}{b}\hspace{-5pt}
\sqtile{qa}{y}{b}{q'\!y}\hspace{-5pt}
\sqtile{y}{z}{q'\!y}{z}
      \end{minipage}\hfill
      \begin{minipage}[b]{2.25cm}
\sqtile{x}{qa}{x}{q'\!b}\hspace{-5pt}
\sqtile{qa}{y}{q'\!b}{y}
      \end{minipage}\hfill~
  \caption{Tiles corresponding to transitions   $\tuple{q,a,b,L,q'}$  $\tuple{q,a,b,R,q'}$  and $\tuple{q,a,b,N,q'}$}
  \label{fig:trans}
\end{figure}

It remains to show that the constructed game is correct. Suppose that
$M$ accepts $w$. Then a winning strategy of the existential player is
to always choose the set containing tiles corresponding to the
transition that leads to acceptance. Conversely, if $M$ does not
accept $w$, then the universal player has a winning strategy: he can
always choose the set containing tiles corresponding to the transition
that does not lead to acceptance.
\end{proof}

The theorem above directly leads to the lower bound on the complexity
of tiling games.
\begin{corollary}\label{cor:gamesHard}
  The problem whether Prover has a winning strategy in a tiling game
  is \twoexptime-hard.
\end{corollary}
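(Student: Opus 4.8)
The plan is to derive the corollary from Theorem~\ref{th:TMtoGame} together with the classical identity $\aexpspace = \twoexptime$. Recall that a language lies in \aexpspace exactly when it is recognised by some alternating Turing machine working in space $2^{O(n)}$, and that, by a routine padding and simulation argument, such a machine can be converted (with only polynomial blow-up) into one satisfying all the normalisation conventions of Section~\ref{sec:atm}: a single accepting and a single rejecting state, exactly two transitions per non-final state--symbol pair, strict alternation between existential and universal states, and at most $2^n-1$ used tape cells on inputs of length $n$. Since $\aexpspace = \twoexptime$, to prove the corollary it suffices to give, for an arbitrary language $P \in \twoexptime$, a polynomial-time many-one reduction from $P$ to the tiling-game problem.

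So I would fix $P \in \twoexptime = \aexpspace$ and take a machine $M$ of the above normalised form recognising $P$ in exponential space; $M$ is fixed once $P$ is chosen. On input $w$ the reduction outputs the tiling game $G_{M,w} = \tuple{C,T_0,T_1,n,\tuple{t_0,\ldots,t_n},\Box,L}$ produced by the construction in the proof of Theorem~\ref{th:TMtoGame}. That construction is entirely explicit --- $C = \Gamma \cup (Q\times\Gamma)\cup\{\Box\}$, the initial row is read off from $w$, and the tile sets $T_0,T_1,L$ are assembled from the finitely many schemas associated with the transitions of $M$ --- so $G_{M,w}$ is not merely of size polynomial in $|M|+|w|$ but is \emph{computable} from $M$ and $w$ in time polynomial in $|M|+|w|$, hence in time polynomial in $|w|$ since $M$ is fixed. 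By Theorem~\ref{th:TMtoGame}, Prover has a winning strategy in $G_{M,w}$ if and only if $M$ accepts $w$, i.e.\ if and only if $w \in P$. Thus $w \mapsto G_{M,w}$ is a polynomial-time reduction from $P$ to the tiling-game problem, and the latter is \twoexptime-hard.

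Once Theorem~\ref{th:TMtoGame} is available the argument is pure bookkeeping, and there is no genuine obstacle; the only point deserving a line of attention is the distinction between the \emph{size} of the reduced instance (which Theorem~\ref{th:TMtoGame} bounds) and the \emph{time} needed to compute it, and the latter is immediate from the explicitness of the construction. It is also worth observing that the games output by the reduction are well-formed, so the hardness in fact already holds for well-formed tiling games --- the form in which the game will be encoded into \FOTwoModDesc in the next step.
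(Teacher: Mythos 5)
Your proposal is correct and matches the paper's intent: the paper offers no separate argument for Corollary~\ref{cor:gamesHard}, treating it as an immediate consequence of Theorem~\ref{th:TMtoGame} together with $\aexpspace=\twoexptime$, which is exactly the reduction you spell out. Your added remarks --- that the construction is polynomial-time computable (not merely polynomial-size), that the normalisation of the machine is obtained by standard padding, and that the produced games are well-formed --- are just the bookkeeping the paper leaves implicit.
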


\subsubsection{From Tilling Games to \texorpdfstring{\FOTwoModDesc}{FO2MOD}}

Here we show the second step of the lower-bound proof for
\FOTwoModDesc, which is a reduction from tiling games to
\FOTwoModDesc.  We are going to encode strategies for the existential
player as trees. Every complete path in such an encoding corresponds
to a correct tiling for $G$. The nodes on such a~path, read from the
root to the leaf, correspond to tiles in the tiling, read row by row
from left to right. Intuitively, most nodes have just one child
corresponding to the tile placed directly to the right. Nodes
corresponding to tiles in the last column should have one or two
children, depending on whether it is the existential or universal
player's turn. Formally the situation is a bit more complicated,
because we are not able to prevent the tree from having additional
branches with no meaning (for example, a node may have several
children, each of which encodes the same right neighbor).

For a given number $n$ we will be using unary predicates
$B_0,\ldots,B_{n-1}$ for counting modulo $2^n$; the predicate $B_i$
will be responsible of the $i$-th bit of the corresponding number;
$B_0$ is the least significant bit. This is a standard construction
that can be found e.g. in~\cite{Kieronski06} or \cite{BBCKLMW-tocl},
so we do not present the details here.  In the following we will be
using several macros that expand in an obvious way to formulas of
length polynomial in $n$ over predicates $B_0,\ldots,B_{n-1}$. Some of
them are listed below.

\newcommand{\Nr}{\mathit{Nr}}
\begin{tabular}{rl}
  $\Nr(x)=0$ & the number encoded in the node $x$ is 0\\
  $\Nr(x)=2^n{-}1$ & the number encoded in the node $x$ is $2^n-1$\\
  $\Nr(x)=\Nr(y)+1$ & the number in  $x$ is the successor of the number
                    in $y$\\
  $\Nr(x)>\Nr(y)$ & the number in  $x$ is greater than the number
                    in $y$\\
  $\ldots$&\ldots
\end{tabular}

For example, the macro  $\Nr(x)>\Nr(y)$ expands to
$$
\bigvee_{i=0}^n \Big(B_i(x)\wedge\neg B_i(y)\wedge\bigwedge_{j=i+1}^n
(B_j(x)\Leftrightarrow B_j(y))\Big)
$$

\begin{theorem}\label{th:gamesToLogic}
  For all well-formed tiling games $G$ there exists a formula
  $\phi \in \FOTwoModDesc$ of size polynomial in the size of $G$ such
  that the existential player has a winning strategy in $G$ if and
  only if $\phi$ is satisfiable over finite trees.
\end{theorem}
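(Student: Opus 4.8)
The plan is to construct, for a given well-formed tiling game $G = \tuple{C,T_0,T_1,n,\tuple{t_0,\ldots,t_n},\Box,L}$, a formula $\phi \in \FOTwoModDesc$ whose tree models are exactly the encodings of winning strategies for the existential player. I will use a fixed set of unary predicates: the bit predicates $B_0,\ldots,B_{n-1}$ encoding a position in $\{0,\ldots,2^n-1\}$ (the column of the tile represented by a node), one predicate $P_t$ for each tile $t \in T_0 \cup T_1$ (since $|T_0|+|T_1|$ is polynomial in $|G|$, this is a polynomial number of predicates), a predicate marking whether the current row was built from $T_0$ or from $T_1$, a predicate for ``it is the existential player's turn / universal player's turn'' (which alternates row by row and is the same for every node of a row), and a predicate $\mathit{Branch}$ marking the ``real'' child among possibly several children. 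Intuitively, a complete root-to-leaf path whose nodes, read top to bottom, traverse the rows left-to-right and row-after-row, spells out a correctly tiled corridor; at a node in the last column it is the existential player's move we allow the path to be freely extended (the $\exists^1 y \succv$ witness can be either of the two sets), and at a node in the last column where it is the universal player's move we demand that \emph{all} children realizing $\mathit{Branch}$ be present for both choices of $T_0$ and $T_1$.

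The formula $\phi$ will be a conjunction of: (1) \textbf{local well-formedness}, saying every node carries exactly one tile predicate, exactly one row-parity bit, a consistent bit-vector, and that $\Nr$ increases by one from parent to $\mathit{Branch}$-child inside a row (using the $\Nr(x)=\Nr(y)+1$ macro and the $\succv$ relation), resetting to $0$ when a new row starts; (2) \textbf{horizontal matching}, forcing $\tuple{t,t'}\in H$ whenever a $\mathit{Branch}$-parent with tile $t$ and non-maximal column number has a $\mathit{Branch}$-child with tile $t'$; (3) \textbf{vertical matching}, which is the delicate part and where I will lean on the modulo counting power of the logic; (4) \textbf{initial row}: the root and the first $n+1$ nodes of the first row carry the tiles $t_0,\ldots,t_n$, and the rest of that row is white, with the whole first row padded to length $2^n$; (5) \textbf{edge conditions}: all edge squares white, last row built from $L$; and (6) \textbf{the game/alternation constraints}: at a node in column $2^n-1$ with existential parity, there is a $\mathit{Branch}$-child continuing into a new row from one of $T_0,T_1$; at universal parity, there are $\mathit{Branch}$-children for \emph{both} choices (encoded by two $\forall\exists$-style conjuncts demanding a $T_0$-continuation child and a $T_1$-continuation child); and a termination clause saying every $\mathit{Branch}$-path eventually reaches a last row in $L^*$, which in the finite-tree setting is automatic once we say leaves must sit in a row that is entirely in $L$.

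The main obstacle is \textbf{vertical color matching across a distance of $2^n$}: a node in row $r$ column $c$ must carry a tile vertically compatible with the tile in row $r{+}1$ column $c$, but these two nodes are $2^n$ apart along the path, and \FOt-with-two-variables cannot directly ``look back'' $2^n$ steps. This is exactly the obstruction that the modulo counting quantifiers are designed to overcome, and the well-formedness assumption on $G$ is what makes it possible: since a row is completely determined by the single bit (``which of $T_0$, $T_1$''), I do not need to propagate full color information — I only need to certify, at the start of each row, which set was chosen, and then let the local horizontal-matching and the per-tile transition structure force the unique row. So the vertical constraint reduces to: the bottom colors of the tiles in row $r$ must equal the top colors of the (uniquely determined) tiles in row $r{+}1$; because of well-formedness this is equivalent to a \emph{local} compatibility between the last-column tile of row $r$ and the first-column tile of row $r{+}1$ together with the global choice of set, which I can phrase with a bounded look using $\succv$ and the column-bit macros. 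The modulo quantifier enters to enforce the more subtle ``counting'' invariant — for instance, to guarantee that along a $\mathit{Branch}$-path the column counter, which is reset and incremented, has advanced by a multiple of $2^n$ between two consecutive row-starts, i.e. that each row really has length exactly $2^n$ — by asserting $\existsmod{=}{0}{2^n}$-style conditions on the set of $\mathit{Branch}$-descendants lying in a given row, using the bit predicates to define membership.

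Once $\phi$ is assembled I will argue correctness in two directions. For the forward direction: given a winning strategy $\sigma$ for the existential player, I build a tree by taking, for each play consistent with $\sigma$, the sequence of tiles it produces, laying them out row by row into a path, joining these paths into a tree branching exactly at the universal-player moves (where both $T_0$ and $T_1$ continuations appear) and at existential-player moves following $\sigma$ (a single continuation); marking every node on these canonical paths with $\mathit{Branch}$; and checking clause by clause that $\phi$ holds — finiteness of the tree follows because a winning strategy in a tiling game wins after a bounded number of moves along every play. For the backward direction: given a finite tree model $\str{T}\models\phi$, I read off a strategy by following, from the root, the $\mathit{Branch}$-labeled subtree; clauses (1)–(5) guarantee that each maximal $\mathit{Branch}$-path spells a correctly tiled corridor and clause (6) guarantees the alternation pattern matches the two players, so the $\mathit{Branch}$-subtree is (the unfolding of) a winning strategy for the existential player; the extra non-$\mathit{Branch}$ garbage children are harmless since no clause constrains them beyond local well-formedness, which is why I deliberately relativize all the structural clauses to $\mathit{Branch}$. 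Combining this theorem with Corollary~\ref{cor:gamesHard} yields the \twoexptime lower bound and hence, with Theorem~\ref{th:upper}, the \twoexptime-completeness announced in the main theorem.
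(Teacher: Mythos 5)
Your overall scaffolding (bit predicates $B_0,\ldots,B_{n-1}$ for column numbers, one unary predicate per tile, alternation marking of rows, branching at universal moves, and the two-way strategy/model translation) matches the paper's construction, but the proposal breaks down exactly at the step you yourself flag as the main obstacle: vertical matching. Your claim that, thanks to well-formedness, the vertical constraint ``reduces to a local compatibility between the last-column tile of row $r$ and the first-column tile of row $r{+}1$ together with the global choice of set'' is circular. Well-formedness says that among rows satisfying \emph{both} horizontal and vertical adjacency with the tiling built so far, exactly one can be formed from each of $T_0$, $T_1$; it does not say that horizontal adjacency inside the new row, the set label, and a junction check at the row boundary determine that row. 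The formula itself must enforce vertical adjacency, for otherwise a model may follow a correct row $r$ with \emph{any} horizontally consistent row from the chosen set that ignores $r$ entirely (in the Turing-machine games, the copying tiles $\tuple{a,b,a,b}$ exist for all colors $a,b$, so horizontal adjacency alone admits rows encoding arbitrary tape contents with the right first tile), and no clause of your $\phi$ rejects such models; the strategy read off them need not be winning, so the right-to-left direction of the theorem fails. Moreover, the theorem quantifies over all well-formed games, so you cannot appeal to any ``per-tile transition structure'' beyond what the tile sets themselves provide.

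The missing idea, which is the heart of the paper's proof, is how to say ``the tile directly above has such-and-such lower colors'' using only two variables and $\succv,\lessv$: rows cannot be numbered, but in a correct tiling every non-white color occurs an even number of times in each column strictly above the current row (each upper-edge occurrence is matched by a lower-edge occurrence in the row above), so the color directly above the current node is the unique non-white color occurring an \emph{odd} number of times among its same-column ancestors, and it is white iff every color occurs there an even number of times. This is expressed by conjuncts of the form $\existsmod{=}{1}{2}y\,(y\lessv x \wedge \text{same column} \wedge \ldots)$ and $\existsmod{=}{0}{2}y\,(\ldots)$, and it is the one essential use of modulo counting in the lower bound. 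By contrast, the purpose for which you invoke modulo quantifiers --- certifying that each row has length exactly $2^n$ --- needs no counting at all: the column counter incremented modulo $2^n$ along $\succv$ and forced to wrap at $2^n{-}1$ already handles it in plain \FOt. (A side remark: the paper does not need your $\mathit{Branch}$ predicate; spurious branches are harmless because every complete path is forced to encode a correct tiling.) As it stands, then, the proposal has a genuine gap at the core of the reduction; it can be repaired by replacing the ``well-formedness makes vertical matching local'' step with the per-column parity mechanism above.
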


\begin{proof}[Sketch of proof]
Let $G=\tuple{C,T_0,T_1,n,\tuple{t_0,\ldots, t_{n}},\Box, L}$. We
are going to define the formula $\phi$ as a conjunction of several
smaller formulas responsible for different aspects of the encoding.
Most of these aspects are routine; the most interesting is adjacency. 

\subparagraph{Numbering of nodes.} We start by numbering nodes on
paths in the underlying tree. These numbers encode the column numbers
of the corresponding tiles.  The following formulas express that the root is
numbered $0$, the number of any other node is the number of its father
plus one modulo $2^n$, and that all rows are complete (i.e., they have
$2^n$ tiles). 
 \begin{gather*}
  \forall x \big(\neg (\exists y\; y\succv x) \Rightarrow \Nr(x)=0\big)\\
  \forall x \big( \Nr(x)\neq 2^n{-}1 \Rightarrow (\exists y\; x \succv
  y)\wedge \forall y\; (x\succv y \Rightarrow \Nr(y)=\Nr(x)+1)\big)\\
  \forall x \big( \Nr(x)= 2^n{-}1 \Rightarrow \forall y\; (x\succv y
  \Rightarrow \Nr(y)=0)\big)
 \end{gather*}

 \newcommand{\first}{\mathit{First}}
 \newcommand{\last}{\mathit{Last}}
 \newcommand{\tile}{\mathit{tile}}
 \newcommand{\white}{\mathit{white}}
 \newcommand{\setT}{\mathit{setT}}
 \newcommand{\setL}{\mathit{setL}}
 \newcommand{\move}{\mathit{move}}

 \subparagraph{Tiles and colors.} Let $t_0,\ldots, t_m$ be an
 enumeration of all tiles occurring in the game, that is in
 $T_0\cup T_1\cup \{t_0,\ldots,
 t_n,\tuple{\Box,\Box,\Box,\Box}\}$. The predicates
 $\tile_1,\ldots,\tile_m$ correspond to these tiles. For each color
 $c\in C$ we introduce four predicates
 $\pi_1^c,\pi_2^c,\pi_3^c,\pi_4^c$ that will be used to encode the
 four colors of a tile. The formulas below express that each node in
 the underlying tree corresponds to precisely one tile and is colored
 with the four colors of the tile. We assume here that
 $t_i=\tuple{c_i^1,c_i^2,c_i^3,c_i^4}$. We also introduce predicates
 $\setT_0$, $\setT_1$ and $\setL$ corresponding to the sets $T_0$,
 $T_1$ and $L$, respectively.

   \begin{gather*}
     \forall x \Big(\bigvee_{i=0}^m \big(\tile_i(x) \wedge
     \bigwedge_{j\neq i} \neg \tile_j(x)\big)\Big)\\
     \forall x \bigwedge_{i=1}^4\Big(\bigvee_{c\in C}^m \big(\pi_i^c(x) \wedge
     \bigwedge_{c'\neq c} \neg \pi_i^{c'}(x)\big)\Big)\\
     \bigwedge_{i=0}^m \Big( \forall x \big(\tile_i(x)\Leftrightarrow
     \pi_1^{c_i^1}(x)\wedge\pi_2^{c_i^2}(x)\wedge\pi_3^{c_i^3}(x)
     \wedge\pi_4^{c_i^4}(x)\big)\Big)\\
    \Big( (\bigvee_{t_i\in T_0}\tile_i(x))\Leftrightarrow \setT_0(x) \Big)\wedge
     \Big( (\bigvee_{t_i\in T_1}\tile_i(x))\Leftrightarrow \setT_1(x)\Big)\wedge
     \Big( (\bigvee_{t_i\in L}\tile_i(x))\Leftrightarrow \setL(x)\Big)
  \end{gather*}
  
  \subparagraph{First and last row.} The predicates $\first$ and
  $\last$ are used to distinguish the first and the last row of a
  correct tiling. A node $x$ corresponds to a tile in the first row if
  there is no other tile in the same column in previous rows. The last
  row is described dually. The first row is built from tiles
  $t_0,\ldots, t_{n}$ and padded out with white tiles, the last row is
  built only from tiles in $L$. 
   We assume here
 that $\tile_\white$ corresponds to the tile
 $\tuple{\Box,\Box,\Box,\Box}$.

 \begin{gather*}
    \forall x \big(\first(x)\Leftrightarrow \neg (\exists y\; y\lessv
    x)\wedge \Nr(x)=\Nr(y)\big)\\
    \forall x \big(\last(x)\Leftrightarrow \neg (\exists y\; x\lessv
    y)\wedge \Nr(x)=\Nr(y)\big)\\
    \bigwedge_{i=0}^n \Big(\forall x\big(\first(x)\wedge \Nr(x)=i
    \Rightarrow \tile_i(x)\big)\Big)\\
    \forall x\big(\first(x)\wedge \Nr(x)> n
    \Rightarrow \tile_\white(x)\big)\\
     \forall x\big(\first(x)\Rightarrow \pi_1^\Box(x)\wedge\pi_2^\Box(x) )\\
    \forall x\; \big(\last(x)
    \Rightarrow  \pi_3^\Box(x)\wedge\pi_4^\Box(x) \wedge\setL(x)\big)
  \end{gather*}

  \subparagraph{Existential and universal rows.} Each element in each
  row is marked with predicate $E$ or $A$ depending on which player's
  turn it is. Each row is marked the same (each element has the same
  marking as its left neighbor, if it exists). The first row is
  existential and then the marking alternates between existential and
  universal.

  \begin{gather*}
    \forall x\big( E(x)\Leftrightarrow \neg A(x)\big)\\
    \forall x \big( \Nr(x)\neq 0 \Rightarrow \exists y\; (y\succv x
    \wedge (E(y)\Leftrightarrow E(x))\big)\\
    \forall x (\first(x)\Rightarrow E(x))\\
    \forall x \big( \Nr(x)= 0 \wedge \neg\first(x)\Rightarrow \exists
    y\; (y\succv x \wedge (E(y)\Leftrightarrow A(x))\big)
  \end{gather*}

  \subparagraph{Universal rows have two successors.}  Each non-first
  row is marked with predicate $\move_0$ or $\move_1$ depending on the
  set of tiles ($T_0$ or $T_1$) from which it is built. Universal
  non-last rows have two successors, marked respectively with
  $\move_0$ and $\move_1$.

  \begin{gather*}
    \forall x\; \Big(\neg\first(x)\Rightarrow
    \big(\move_0(x)\Leftrightarrow \neg \move_1(x)\big)\Big)\\
    \forall x \big( \Nr(x)\neq 0 \Rightarrow \exists y\; (y\succv x
    \wedge (\move_0(y)\Leftrightarrow \move_0(x))\big)\\
    \forall x\; (\move_0(x)\Rightarrow \setT_0(x))\wedge \forall x\;
    (\move_1(x)\Rightarrow \setT_1(x))\\
    \forall x \Big( \Nr(x)= 2^n{-}1 \wedge \neg\last(x)\wedge
    A(x)\Rightarrow \big(\exists y\; (x\succv y \wedge
    \move_0(y)\big)\wedge \exists
    y\; (x\succv y \wedge \move_1(y))\big)\Big)\\
  \end{gather*}

 \subparagraph{Horizontal adjacency.} This is simple. We first
 establish the white frame on the first and last tile in each row and
 then simply say that for each non-first tile in any row the left edge
 of the tile matches the right edge of the preceding tile. 
 \begin{gather*}
   \forall x\; \big(\Nr(x)=0\Rightarrow \pi_1^\Box(x)\wedge\pi_3^\Box(x)\big)\\
   \forall x\; \big(\Nr(x)=2^n{-}1\Rightarrow
   \pi_2^\Box(x)\wedge\pi_4^\Box(x)\big)\\
   \bigwedge_{c\in C}\Big(\forall x\big((\Nr(x)\neq 0) \wedge \pi_1^c(x)\Rightarrow \exists
   y\; (y\succv x \wedge \pi_2^c(y))\big)\Big)\\
   \bigwedge_{c\in C}\Big(\forall x\big((\Nr(x)\neq 0) \wedge\pi_3^c(x)\Rightarrow \exists
   y\; (y\succv x \wedge  \pi_4^c(y))\big)\Big)\\
 \end{gather*}

 \subparagraph{Vertical adjacency.} This is the tricky part of the
 encoding. The difficulty comes from the fact that we cannot number
 rows of the tiling and we have no means to say that two tiles occur
 in consecutive rows. Using predicates $B_0,\ldots, B_{n-1}$ we can
 number the $2^n$ columns, but the number of rows may be much higher
 and we cannot afford having enough predicates for numbering
 them. Therefore, we can say that two tiles occur in the same column
 (or in consecutive columns), but we cannot do the same with rows. On
 the other hand, when we add a new tile to a row, we have to make sure
 that the upper edge of the new tile matches the lower edge of the
 tile directly above, so we have to read the colors of the tile
 directly above. For non-white colors this can be done by observing
 that each occurrence of a color in an upper edge of a row must be
 matched with another occurrence of the same color in a lower edge of
 the previous row in the same column, therefore the number of
 occurrences of each color in each column should be even. Hence the
 color directly above the upper edge of the current row is the only
 non-white color that occurs an odd number of times in the current
 column in preceding rows. In the case of white color we have to take
 into consideration the upper white edge of the constructed tiling, so
 the color directly above is white if it occurs an even number of
 times in the current column in preceding rows.

 \begin{gather*}
   \forall x\big(\pi_1^\Box(x)\Rightarrow \existsmod{=}{0}{2}y\;\;
   y\lessv x \wedge  \Nr(y){=}\Nr(x)\wedge (\pi_1^\Box(y)\vee\pi_3^\Box(y))\big)\\
   \forall x\big(\pi_2^\Box(x)\Rightarrow \existsmod{=}{0}{2}y\;\;
   y\lessv x \wedge  \Nr(y){=}\Nr(x)\wedge (\pi_2^\Box(y)\vee\pi_4^\Box(y))\big)\\
   \bigwedge_{c\in C\setminus\{\Box\}}\Big(\forall
   x\big(\pi_1^c(x)\Rightarrow \existsmod{=}{1}{2}y\; y\lessv x \wedge
\Nr(y){=}\Nr(x)\wedge   (\pi_1^c(y)\vee\pi_3^c(y))\big)\Big)\\
   \bigwedge_{c\in C\setminus\{\Box\}}\Big(\forall
   x\big(\pi_2^c(x)\Rightarrow \existsmod{=}{1}{2}y\; y\lessv x \wedge
\Nr(y){=}\Nr(x)\wedge   (\pi_2^c(y)\vee\pi_4^c(y))\big)\Big)\\
 \end{gather*}

 \subparagraph{Correctness of the constructed formula.} Let $\phi$ be
 the conjunction of all formulas mentioned above. It is not
 difficult to see that the size of $\phi$ is polynomial in the size of
 $G$: the longest part of $\phi$ is the encoding of tiles and colors,
 which is proportional in length to the sum of squared numbers of
 tiles and colors.

 Assume that $\phi$ has a finite model $\str{M}$. The formula
 \emph{Tiles and colors} guarantees that each node in $\str{M}$
 directly encodes precisely one tile. The formula \emph{Numbering of
   nodes} guarantees that nodes on each root-to-leaf path are
 correctly numbered modulo $2^n$, with the root numbered $0$ and the
 leaf numbered $2^n-1$. Thus each segment of length $2^n$ of such
 a~path, consisting of nodes numbered from $0$ to $2^n-1$, corresponds
 to one row of tiles. The \emph{Horizontal adjacency} formula
 guarantees that (horizontally) adjacent tiles in such a row have
 matching colors. Similarly, \emph{Vertical adjacency} formula
 guarantees that tiles occurring on the same position in two
 consecutive rows (that is, vertically adjacent tiles) have matching
 colors. The \emph{First and last row} formula guarantees that the
 first row is initial and the last row is built from tiles in
 $L$. Therefore each complete path in $\str M$ encodes a correctly
 tiled corridor. By the \emph{Existential and universal rows} formula
 all rows encoded in any such path are alternately marked as
 existential and universal, starting with an existential one. Finally,
 the \emph{Universal rows have two successors} formula guarantees that
 each encoding of a universal row in $\str M$ is followed by encodings
 of two existential rows, one build from tiles in $T_0$ and one from
 tiles in $T_1$. Thus (here we use the assumption that the game is
 well-formed) $\str M$ covers both possible moves of the universal
 player. Now the strategy of the existential player is to follow the
 path in $\str M$ corresponding to the partial tiling as it is
 constructed during the game. She starts in the root of $\str M$.
 Then, every time when it is her turn, the existential player reads
 from $\str M$ any successor row of her current position and replies
 with this row, moving down the tree to the position of the successor
 row.  Every time when it is the universal player's turn, his both
 possible moves are encoded as successor rows of the current position
 of the existential player, so she can always follow the branch
 chosen by the universal player. Since $\str M$ is finite, each play
 stops after finitely many rounds with the constructed tiling
 corresponding to a path in $\str M$. Therefore in each play we obtain
 a correctly tiled corridor and the existential player wins.

 For the other direction, assume that the existential player has a
 winning strategy. We construct a model $\str M$ of $\phi$
 inductively, level by level, as follows.  We start with $2^n$ nodes,
 number them from $0$ to $2^n-1$ using predicates
 $B_0,\ldots, B_{n-1}$ and connect consecutive nodes with predicate
 $\succv$. Then we label nodes numbered $0$ to $n$ with predicates
 $\tile_0,\ldots,\tile_n$, respectively, and nodes numbered $n+1$ to
 $2^n-1$ with $\tuple{\Box,\Box,\Box,\Box}$. Then (and later, always
 after adding new labels of the form $\tile_i$) we add appropriate
 labels $\pi^c_j, \setT_0, \setT_1$ and $\setL$ to make the formula
 \emph{Tiles and colors} true. Similarly, we add labels $\lessv$ to
 all pairs of nodes that are connected by the transitive closure of
 $\succv$.  We also label all these nodes with predicates $\first$ and
 $E$. This way we obtain the encoding of the initial row in the
 game. Next we inductively, level by level, construct the remaining
 parts of $\str M$.

 Assume that $\str M$ is constructed up to some level $k$, with all
 leaves labeled $E$, and that each path constructed so far encodes a
 partial tiling in some play after $k$ rounds, with existential
 player's turn. We extend $\str M$ leaf by leaf. Let us consider one
 such leaf $\ell$, it encodes the last tile in some row. If all tiles
 in this row are in the set $L$, we label all nodes in this row with
 predicate $\last$ and finish the construction. Otherwise let $r$ be
 the next row given by the winning strategy of the existential player
 in the partial play encoded by the path from the root to $\ell$. We
 take fresh $2^n$ nodes and extend the path from root to $\ell$ with
 the encoding of a row $r$, as above, but this time labeling all new
 nodes with predicate $A$ indicating the universal player's move. We
 also label all new nodes with predicate $\move_0$ or $\move_1$,
 depending on whether $r$ is constructed from tiles in $T_0$ or $T_1$,
 respectively.  Again, if the row is final, it is marked with
 predicate $\last$, otherwise we take twice $2^n$ fresh nodes and
 connect to the current leaf two segments of length $2^n$ that encode
 two possible moves of the universal player. We mark each node in both
 segments with predicate $E$ and thus finish the construction of level
 $k+1$ at node $\ell$. The construction is repeated for all leafs at
 level $k$. Since during the construction we follow the winning
 strategy, no encoded play can last forever and the construction ends
 after finitely many iterations. By inspection of all conjuncts one
 can check that the constructed tree is a model of $\phi$.
\end{proof}

As a corollary of the theorem above and Corollary~\ref{cor:gamesHard}
we get the main theorem of this section.
\begin{theorem}
\label{th:lower}
The satisfiability problem for \FOTwoModDesc interpreted over finite
trees is \twoexptime-hard.
\end{theorem}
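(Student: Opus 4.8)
The plan is simply to chain the two reductions developed in this section. Recall that $\aexpspace = \twoexptime$, so to prove \twoexptime-hardness it suffices to exhibit a polynomial-time many-one reduction from the word problem of alternating exponential-space Turing machines (which is \aexpspace-complete, hence \twoexptime-complete) to the satisfiability problem for \FOTwoModDesc over finite trees. Fix such a machine $M$, normalised as in Section~\ref{sec:atm}, together with an input word $w$.

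First I would invoke Theorem~\ref{th:TMtoGame}: it produces, in time polynomial in $|M|+|w|$, a tiling game $G_{M,w}$ of polynomial size that is \emph{well-formed} and has the property that the existential player (Prover) has a winning strategy in $G_{M,w}$ if and only if $M$ accepts $w$. Next I would apply Theorem~\ref{th:gamesToLogic} to $G_{M,w}$. The hypothesis of that theorem --- that the game be well-formed --- is exactly what Theorem~\ref{th:TMtoGame} guarantees, so the application is legitimate; it yields, again in polynomial time, a formula $\phi_{M,w} \in \FOTwoModDesc$ of size polynomial in $|G_{M,w}|$ (hence polynomial in $|M|+|w|$) such that the existential player wins $G_{M,w}$ if and only if $\phi_{M,w}$ is satisfiable over finite trees.

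Composing the two equivalences, $\phi_{M,w}$ is satisfiable over finite trees if and only if $M$ accepts $w$, and the map $(M,w)\mapsto \phi_{M,w}$ is computable in polynomial time, being the composition of two polynomial-time computable maps whose intermediate output has polynomial size. This is precisely a polynomial-time reduction witnessing \twoexptime-hardness of the satisfiability problem for \FOTwoModDesc. (Equivalently, one may phrase it as: Corollary~\ref{cor:gamesHard} already gives \twoexptime-hardness of the winning-strategy problem for tiling games, the hardness being witnessed by well-formed games via Theorem~\ref{th:TMtoGame}, and Theorem~\ref{th:gamesToLogic} then transfers this hardness to \FOTwoModDesc.) There is no genuine obstacle at this point: all the substantial work --- the new variant of tiling games, the encoding of alternating machines into them, and especially the delicate parity-based treatment of vertical adjacency in the logical encoding --- has already been carried out in Theorems~\ref{th:TMtoGame} and~\ref{th:gamesToLogic}. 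The only thing to verify here is the routine bookkeeping that both reductions run in polynomial time and that well-formedness is preserved along the chain, which it is.
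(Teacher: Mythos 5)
Your proposal is correct and matches the paper's own argument, which obtains Theorem~\ref{th:lower} by composing the reduction of Theorem~\ref{th:TMtoGame} (alternating exponential-space machines to well-formed tiling games, yielding Corollary~\ref{cor:gamesHard}) with the reduction of Theorem~\ref{th:gamesToLogic} (well-formed tiling games to \FOTwoModDesc satisfiability over finite trees). Your explicit remark that well-formedness, required by Theorem~\ref{th:gamesToLogic}, is guaranteed by Theorem~\ref{th:TMtoGame} is exactly the bookkeeping the paper leaves implicit.
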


\section{Conclusions and future work}
We have shown that the satisfiability problem for two-variable logic
extended with modulo counting quantifiers and interpreted over finite
trees is \twoexptime-complete. The upper bound is based on the
small-model property of the logic; for the lower bound we have
developed a version of tiling games for \aexpspace computations.

There are several possible directions for future work. One of them is
studying restrictions or extensions of the logic presented
here. Natural candidates for restrictions include \emph{guarded
  fragment} of the logic or \emph{unary alphabet restriction} as
in~\cite{BBCKLMW-tocl}; natural extensions include arbitrary
uninterpreted binary symbols as in~\cite{BednarczykChK-CSL17}. Another
possibility is investigation of the (finite) satisfiability problem
for \FOTwoMod\ on arbitrary structures---we even do not know whether
this problem is decidable. Yet another direction is to study the
expressive power of the logic and to find an expressively equivalent
extension of CTL.

\bibliography{bibliography}

\endgroup

\end{document}